\DeclareSymbolFont{largesymbolsCM}{OMX}{cmex}{m}{n}
\let\sum\relax
\DeclareMathSymbol{\sum}{\mathop}{largesymbolsCM}{"50}
\DeclareSymbolFont{lettersCM}{OML}{cmm} {m}{it}
\DeclareMathSymbol{\pi}{\mathord}{lettersCM}{"19}
\DeclareSymbolFont{symbolsCM}{OMS}{cmsy}{m}{n}
\DeclareMathSymbol{\infty}{\mathord}{symbolsCM}{"31}
\DeclarePairedDelimiter\abs{\lvert}{\rvert}%
\definecolor{linkcolor}{HTML}{0645AD}
\definecolor{white}{RGB}{255,255,255}
\definecolor{crimson}{RGB}{220,20,60}
\definecolor{blue}{RGB}{0,0,205}
\definecolor{myblue}{RGB}{80,80,160}
\definecolor{mygreen}{RGB}{80,160,80}
\colorlet{tcrimson}{white!40!crimson}
\colorlet{tblue}{white!40!blue}
\definecolor{porange}{HTML}{EE7F2D}
\definecolor{simblue}{HTML}{00798c}
\definecolor{simgreen}{HTML}{66a182}
\definecolor{simgrey}{HTML}{b3b3b3}
\def\*#1{\mathbf{#1}}
\DeclareMathOperator{\interior}{int}
\newcommand{\argmax}{\mathop{\rm arg~max}\limits}
\newcommand{\argmin}{\mathop{\rm arg~min}\limits}
\newcommand{\norm}[1]{\lVert#1\rVert}
\newcommand{\inprob}{\overset{p}{\to}}
\newcommand{\E}{\mathbb{E}}
\newcommand{\V}{\mathbb{V}}
\newcommand{\hP}{\mathbb{P}}
\newcommand{\indep}{\!\perp\!\!\!\perp}
\newtheorem{theorem}{Theorem}
\newtheorem{proposition}{Proposition}
\newtheorem{lemma}{Lemma}
\newtheorem{assumption}{Assumption}
\numberwithin{equation}{section}
\newcommand{\IPTWFE}{\texttt{IPTW-FE}} 
\newcommand{\IPTWT}{\texttt{IPTW-True}} 
\newcommand{\IPTW}{\texttt{IPTW}} 
\begin{document}

\title{\textbf{Adjusting for Unmeasured Confounding in Marginal Structural Models with Propensity-Score Fixed Effects}\thanks{Thanks to Adam Glynn for extensive discussions and feedback. We also thank Gary King for generous comments. Any errors remain our own. Comments welcome.}}
\author{
Matthew Blackwell\thanks{
  Associate Professor, Department of Government, Harvard University.
  Email:~\href{mblackwell@gov.harvard.edu}{mblackwell@gov.harvard.edu}.
  Web:~\url{www.mattblackwell.org} }%
\and%
Soichiro Yamauchi\thanks{Graduate student, Department of Government, Harvard University.
  Email:~\href{mailto:syamauchi@g.harvard.edu}{syamauchi@g.harvard.edu}.
  Web:~\url{https://soichiroy.github.io}}
  }
\date{\today}
\maketitle

\setstretch{1.2}
\begin{abstract}
Marginal structural models are a popular tool for investigating the effects of time-varying treatments, but they require an assumption of no unobserved confounders between the treatment and outcome. With observational data, this assumption may be difficult to maintain, and in studies with panel data, many researchers use fixed effects models to purge the data of time-constant unmeasured confounding. Unfortunately, traditional linear fixed effects models are not suitable for estimating the effects of time-varying treatments, since they can only estimate lagged effects under implausible assumptions. To resolve this tension, we a propose a novel inverse probability of treatment weighting estimator with propensity-score fixed effects to adjust for time-constant unmeasured confounding in marginal structural models of fixed-length treatment histories. We show that these estimators are consistent and asymptotically normal when the number of units and time periods grow at a similar rate. Unlike traditional fixed effect models, this approach works even when the outcome is only measured at a single point in time as is common in  marginal structural models. We apply these methods to estimating the effect of negative advertising on the electoral success of candidates for statewide offices in the United States.
\end{abstract}

\clearpage

\setstretch{1.75}

\section{Introduction}\label{s:introduction}

Researchers---especially social scientists---often find themselves estimating causal effects in complex, dynamic settings. For example, when studying the effects of political campaign strategies,  researchers must confront how candidates change their messaging and behavior in response to the shifting landscape of the campaign. While most social scientists ignore these dynamics, a handful of scholars have applied the marginal structural model (MSM) approach to estimating the effects of time-varying treatments \citep{RobHerBru00} to a variety of social science settings  \citep{Blackwell13, LadHarWin18, CreSim19, Kurer20}. Given the richness of dynamic phenomena in the social sciences, it is perhaps surprising that such methods are not applied more widely in those fields.  One key reason is that most approaches to estimating the effects of these treatments---including marginal structural models, but also parametric g-computation and structural nested mean models---require a no unmeasured confounding assumption that many social scientists find implausible. In this paper, we extend the current MSM framework to allow for time-constant unmeasured confounding and apply this approach to estimating the time-varying effects of negative advertising in U.S.\ elections.

Marginal structural models have been a popular way to address the key challenges that time-varying treatments pose, especially in the biomedical sciences. The unique challenges of these treatments include the presence of time-varying confounders, or those variables that are affected by past treatment and confound the relationship between future treatment and the outcome. In the electoral context, changes in polling for one candidate or the other might affect a candidate's decision to go on the attack, but those attacks might affect future polling.
Typical adjustment for these covariates via regression or matching will lead to post-treatment bias,
while omitting them from these methods will obviously lead to confounding bias. To resolve this dilemma,
Robins and colleagues have developed an inverse probability of treatment weighting (IPTW) approach that allows
for the estimation causal effects of treatment histories without inducing post-treatment bias
\citep{Robins98, Robins98b, Robins99, RobHerBru00}.
This weighting approach is often applied to the estimation of parameters from a model for the marginal mean of the potential outcomes, which these authors call the marginal structural model.

One limitation of the IPTW approach to marginal structural models is that it usually relies on an assumption of
sequential ignorability, which essentially states that there are no unmeasured confounders between the treatment
at time $t$ and the outcome conditional on the treatment and covariate history up to that point.
In the social sciences, this assumption could be suspect when units select into treatment based on data not available to the researcher.
While sensitivity analyses can help diagnose the effects of unmeasured confounding on MSM estimates \citep{BruHerHan04, Blackwell14}, these approaches require knowledge of the severity of the unmeasured confounding, which is difficult for any applied researcher to possess.
When this type of confounding is constant over time and the researcher has panel data on both treatment and outcomes, social scientists often use the so-called linear fixed effects (LFE) approach that
transforms all variables to deviations from their unit means \citep[for a review see][Ch. 5]{AngPis09}.
While this can purge time-constant unmeasured confounding under certain assumptions, the linear fixed effects approach has two important limitations for social science applications.
First, the LFE approach requires the outcome and the treatment to be measured at multiple points in time,
but political outcomes such as final vote share in an election are often single endline measures.
Second, LFEs generally cannot estimate the effects of treatment histories under sequential ignorability
without much stronger assumptions that rule out feedback between the outcome and the treatment \citep{Sobel12, ImaKim19}.

To overcome these issues, this article extends the marginal structural models approach to estimating the effects of time-varying treatments to allow for time-constant unmeasured confounding.
To do so, we propose a straightforward modification to IPTW: to include unit-specific fixed effects in the propensity score model used to construct the inverse-probability weights.
While this approach will lead to an incidental parameters problem for the propensity score model \citep{NeySco48},
we show that if this model is correctly specified and the number of time periods grows at the same rate as the number of units,
the IPTW with fixed effects estimator (IPTW-FE) will lead to a consistent and asymptotically normal estimator for the parameters of the marginal structural model.
This is true even when we only have a single measurement of the outcome after the final instance of treatment. This approach relies on a within-unit version of sequential ignorability, which allows the type of feedback between the treatment and outcome usually ruled out by LFE estimators.
The essential logic of the IPTW-FE is quite simple. If the propensity score model is stable over time and we have a number of time periods,
we can allow for each unit to have a unique offset to the propensity score model that should incorporate any time-constant variables, measured or unmeasured.

In addition to the literature on marginal structural models, we also build on a robust literature on nonlinear panel models.
Initially, these models follow the path of LFEs in side-stepping the incidental parameters problem
by leveraging estimation techniques that did not require direct estimates of the unit-specific unobserved effects \citep[see][for a review]{Lancaster00}.
Unfortunately, these approaches have a severe limitation for our purposes: by avoiding the estimation of the unit-specific effects,
they preclude the ability to calculate the types of predicted probabilities required for IPTW estimation.
Due to this limitation, we build on approaches that focus on large-$T$ approximations, which consider asymptotic results where the number of time periods grows with the number of units. A host of papers have taken this approach to investigate the properties of maximum likelihood estimators of nonlinear panel models with unit (and time) effects \citep{HahNew04, AreHah07, FernandezVal09, HahKue11,  FerWei16, FerWei18}. Many of these approaches have developed bias correction techniques since these estimators are often asymptotically biased. Our approach avoids this issue with these estimators for two reasons. First, we follow the MSM literature and focus on estimating the parameters of the MSM at the slower $\sqrt{N}$ rate rather than the $\sqrt{NT}$ rate so that the asymptotic bias described in this literature converges to 0. Second, we focus on the effect of a finite number of lags of treatment, which limits how much the bias from noisy fixed effect estimation can affect the estimates of the MSM parameters. 

There are some important limitations to our IPTW-FE approach.
First, given the asymptotic setup, this approximation will be more accurate
when the number of units and the number of time periods are not drastically different and when the treatment process provides new information over time. The latter requirement is captured by a strong mixing assumption on the treatment and covariate processes, though we do not impose stationarity.
Second, we derive the asymptotic properties of the IPTW-FE estimator for a MSM that is a function of a fixed number of periods
in the treatment history, rather than the entire history. Third, the performance of IPTW-FE depends heavily on the how severe the unit-specific heterogeneity is.
This is due to extreme unit fixed effects in the propensity score models leading to extreme weights, making estimation less stable.
Relatedly, units that never or always receive treatment in the sample have unit-fixed effects that are not identified.
We explore two procedures for handling these situations: dropping units without treatment variation and imputing their propensity scores
to values close to zero or one.
In spite of these limitations, we find in simulations that IPTW-FE can outperform naive IPTW without fixed effects, even when the unobserved heterogeneity is severe.

Our approach is also related to recent work on causal inference in fixed effects settings. \citet{arkhangelsky2018role} is mostly closely related to our approach here. They investigate how to use inverse probability weighting with fixed effects when a set of sufficient statistics for the treatment process is available, though in a fixed-$T$ setting with no dynamic feedback between the treatment and the outcome and no time-varying covariates. Other work has explained how this dynamic feedback stymies estimation of both contemporaneous effects and the effects of treatment histories with fixed effects assumptions \citep{Sobel12, ImaKim19}. In contrast, our approach allows for feedback between the treatment and the outcome, so long as sequential ignorability holds conditional on the unit-specific effect.

The paper proceeds as follows. We begin in Section~\ref{s:motivation} with a description of out motivating application of the effects of negative advertising in U.S.\ elections. In Section~\ref{s:illustration}, we develop the core intuition for our approach in a simple setting of unit-specific randomized experiments without covariates. Next in Section~\ref{s:review-msm} we review marginal structural models and inverse probability of treatment weighting as they are currently deployed in applied research. When then introduce our fixed-effect approach in Section~\ref{s:iptw-fe}, describing both the assumptions that justify its use and its large-sample properties under these assumptions. In Section~\ref{s:simulation}, we present simulation evidence of the finite-sample performance of this estimator, which shows that it works well, especially when the amount of unmeasured heterogeneity is limited. Finally, we apply the method to the context of negative advertising in Section~\ref{s:application} and conclude with some ideas for future research in Section~\ref{s:conclusion}.

\section{Motivating Application: Negative Advertising in Electoral Campaigns}\label{s:motivation}

In the United States, political advertising is an important way that candidates for office attempt to influence the electorate. One tool that candidates have is the tone of their ads---they might air ads that promote themselves and their agenda, or they may show ads that contrast themselves with their opponent. The latter, which we call negative advertising, is commonplace in campaigns, but its effects are not well understood. In this paper, we apply the IPTW-FE approach to a key empirical question in this literature: what is the effect of negative advertising versus positive advertising on vote shares for a particular electoral candidate? A long literature has explored how and when the tone of an advertisement might affect both the decision to vote (voter mobilization) and the vote choice conditional on voting (persuasion) \citep{LauSigRov07, Blackwell13}, but the results are not conclusive.

Most studies of negative advertising have ignored the sequential nature of this treatment---polling data affects the decision to go negative which in turn affects future polling---though \cite{Blackwell13} used IPTW and MSMs to adjust for this time-varying confounding to investigate the time-varying effects of negative advertising. One worry with that approach, however, is that some campaigns will have baseline higher probabilities of going negative for reasons that could be related to the outcome. For instance, challengers of seats whose incumbents have take unpopular votes or actions might attack more and be more likely to win. Since fully adjusting for these characteristics is difficult, a fixed-effects approach that adjusts for unmeasured baseline confounders could reduce the potential for bias in the estimates of the MSM parameters. Below, we find that our IPTW-FE approach can lead to different substantive conclusions how negative advertising affects electoral outcomes compared to a traditional IPTW approach.

\section{Simple Illustration: Unit-specific Randomized Experiments}\label{s:illustration}

We begin with a simple setting that can illustrate many of the main features of the IPTW-FE methodology. Suppose we observe $N$ units indexed by $i=1,\ldots,N$ each with a binary treatment measured at $T$ points in time, $D_i = \{D_{i1}, \ldots, D_{iT}\}$. We are interested in assessing the effect of the last treatment on some endline measure. We define $Y_i(d_1, \ldots, d_T)$ as the potential outcome under a particular treatment history. For exposition in this simple setting, we assume an extreme no-carryover setting where the outcome only depends on the last period, $Y_i(d_T) \equiv Y_i(d_{i1}, \ldots, d_{i,t-1}, d_T)$.
Below, we will weaken this assumption for marginal structural models in Section~\ref{s:iptw-fe}, but it is useful here to highlight key parts of the approach.  We make the usual consistency assumption, $Y_i = Y_i(1)D_{iT} + Y_i(0)(1 - D_{iT})$. Letting $\tau_d = \E[Y_i(d)]$ for $d = 0, 1$, we define the quantity of interest to be $\tau = \tau_1 - \tau_0$.

We consider a setting where each unit repeatedly makes treatment decisions with unit-specific propensities, but those propensities are unknown to the researcher. In particular, we assume there is a time-constant unit-specific shock, $\alpha_i$, that potentially affects both treatment and the outcome, but that ignorability of treatment assignment holds conditional on that shock:
\[
D_i \indep Y_i(d_T) \mid \alpha_i \qquad \forall i.
\]
We assume that the $D_{it}$ are independently randomly assigned within a unit with probability $\hP(D_{it} = 1 \mid \alpha_i) = \pi(\alpha_i) = \pi_i$, where both $\pi_i$ and $1-\pi_i$ are bounded below by  $c > 0$ for all $i$.

If the unit-specific propensity scores were known, estimation and inference could proceed as usual. Define the following infeasible IPTW estimator:
\begin{equation}
  \label{eq:simple-ipw-infeasible}
  \widetilde{\tau} = \frac{\sum_{i=1}^N (D_{iT} / \pi_i)Y_i}{\sum_{i=1}^N (D_{iT} / \pi_i)} - \frac{\sum_{i=1}^N ((1-D_{iT}) / (1-\pi_i))Y_i}{\sum_{i=1}^N ((1-D_{iT}) / (1-\pi_i))}
\end{equation}
We can verify the consistency of this estimator using the standard IPTW approach. In particular, we can write the first term of the right-hand side of~\ref{eq:simple-ipw-infeasible} as the solution to the sample version of the following population moment condition:
\[
0 = \E\left[\frac{D_{iT}}{\pi_i}\left( Y_i - \tau_1 \right)\right] = \E\left[\frac{D_{iT}}{\pi_i}\left( Y_i(1) - \tau_1 \right)\right] = \E\left[ \frac{\E[D_{iT} \mid \alpha_i]}{\pi_i}\E[Y_i(1) - \tau_1 \mid \alpha_i] \right] = \E[Y_i(1) - \tau_1]
\]
Under regularity conditions, the first term on the right-hand side of~\eqref{eq:simple-ipw-infeasible} will be consistent for $\E[Y_i(1)]$. Applying a similar argument to the second term, we can establish that $\widetilde{\tau}$ is consistent for $\tau$. Using standard asymptotic techniques, we can write this estimator in its asymptotically linear form
\[
\sqrt{N}(\widetilde{\tau} - \tau) = \frac{1}{\sqrt{N}} \sum_{i=1}^N U_i + o_p(1),
\]
where $U_i = U_{i1} - U_{i0}$, and
\[
U_{i1} = (D_{iT}/\pi_i)(Y_i(1) - \tau_1), \qquad  U_{i0} =  ((1-D_{iT})/(1-\pi_i))(Y_i(0) - \tau_0).
\]
The first term in this expansion converges in distribution to $N(0, V)$, where $V = \E[U_i^2]$.


What if, as is almost always the case, we do know the true propensity scores? In the typical cross-sectional case, estimation of the causal effects would not be possible because there could be unmeasured confounding between treatment (captured in the unit-specific propensity scores) and the outcome. In the present setting, however, we have additional information available to us that can adjust for the confounding. In particular, we have the entire treatment history that we can use to estimate the unknown, unit-specific propensity scores. Let $\widehat{\pi}_i = T^{-1} \sum_{t=1}^T D_{it}$ be the sample proportion of treated time periods for unit $i$. We now define the unit-specific IPTW estimator:
\begin{equation}\label{eq:feasible-IPW}
\widehat{\tau} =  \frac{\sum_{i=1}^N (D_{iT} / \widehat{\pi}_i)Y_i}{\sum_{i=1}^N (D_{iT} / \widehat{\pi}_i)} - \frac{\sum_{i=1}^N ((1 - D_{iT}) / (1 - \widehat{\pi}_i))Y_i}{\sum_{i=1}^N ((1 - D_{iT}) / (1 - \widehat{\pi}_i))}
\end{equation}
This plug-in estimator replaces the unknown unit-specific propensity scores with the over-time, within-unit sample averages of the treatment indicators. Somewhat surprisingly, this estimator is consistent and asymptotically normal in spite of the unmeasured confounding, as shown by the following proposition. The proof is in Appendix~\ref{appendix:proof-prop-simple}.

\begin{proposition}\label{prop:simple}
As $N,T \rightarrow \infty$ and $N/T \rightarrow \rho$, where $0 < \rho < \infty$,
the asymptotic distribution of the estimator $\widehat{\tau}$ defined in Equation~\eqref{eq:feasible-IPW}
is given by $\sqrt{N}(\widehat{\tau} - \tau) \overset{d}{\to} \mathcal{N}(0, V)$, where $V = \E[U_i^2]$.
\end{proposition}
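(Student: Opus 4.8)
The plan is to show that the feasible estimator is asymptotically equivalent to the infeasible estimator $\widetilde{\tau}$ of Equation~\eqref{eq:simple-ipw-infeasible}, i.e.\ that $\sqrt{N}(\widehat{\tau} - \widetilde{\tau}) = o_p(1)$; the conclusion then follows immediately from Slutsky's theorem together with the $\sqrt{N}$-asymptotic normality of $\widetilde{\tau}$ already established in the text. Since both estimators are H\'ajek-type ratios, I would treat each arm separately. Writing $\widehat{A}_1 = N^{-1}\sum_{i=1}^N (D_{iT}/\widehat{\pi}_i)Y_i$ and $\widehat{B}_1 = N^{-1}\sum_{i=1}^N D_{iT}/\widehat{\pi}_i$, with $\widetilde{A}_1, \widetilde{B}_1$ the analogues using the true $\pi_i$, a ratio expansion reduces the problem to showing $\sqrt{N}(\widehat{A}_1 - \widetilde{A}_1) = o_p(1)$ and $\sqrt{N}(\widehat{B}_1 - \widetilde{B}_1) = o_p(1)$, after noting that $\widetilde{B}_1 \inprob 1$ (by the law of large numbers and $\E[D_{iT}\mid\alpha_i] = \pi_i$) so that both denominators are bounded away from zero in probability; the control arm is symmetric.

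The heart of the argument is the difference $\sqrt{N}(\widehat{B}_1 - \widetilde{B}_1) = N^{-1/2}\sum_{i=1}^N D_{iT}(\widehat{\pi}_i^{-1} - \pi_i^{-1})$. I would expand around $\pi_i$ using the exact identity $\widehat{\pi}_i^{-1} - \pi_i^{-1} = -\pi_i^{-2}(\widehat{\pi}_i - \pi_i) + \pi_i^{-2}\widehat{\pi}_i^{-1}(\widehat{\pi}_i - \pi_i)^2$, which splits the sum into a linear term and a quadratic remainder. For the linear term, the key computation uses the within-unit design: conditional on $\alpha_i$ the $D_{it}$ are i.i.d.\ Bernoulli($\pi_i$), so $\E[D_{iT}(\widehat{\pi}_i - \pi_i)\mid\alpha_i] = \pi_i(1-\pi_i)/T$ (only the $t=T$ cross term survives). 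Hence each summand has mean of order $1/T$, so the linear term has mean of order $\sqrt{N}/T$ and variance of order $1/T$ (the units being independent), both of which vanish because $N/T\to\rho<\infty$ forces $\sqrt{N}/T\to 0$; an $L^2$/Chebyshev argument then makes the linear term $o_p(1)$. For the quadratic remainder, $\E[(\widehat{\pi}_i - \pi_i)^2\mid\alpha_i] = \pi_i(1-\pi_i)/T$, so $N^{-1/2}\sum_i(\widehat{\pi}_i - \pi_i)^2$ has expectation of order $\sqrt{N}/T\to 0$ and is $o_p(1)$ by Markov. The argument for $\widehat{A}_1 - \widetilde{A}_1$ is identical once I use consistency ($D_{iT}Y_i = D_{iT}Y_i(1)$) and the ignorability $Y_i(1)\indep D_i\mid\alpha_i$ to factor the conditional expectation, together with a finite second-moment bound on $Y_i(1)$.

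The main obstacle is that the quadratic remainder carries a factor $\widehat{\pi}_i^{-1}$, which is not a priori bounded, since with small probability a unit may have $\widehat{\pi}_i = 0$. I would control this by defining the event $\mathcal{E} = \{\min_i\widehat{\pi}_i \geq c/2,\ \min_i(1-\widehat{\pi}_i)\geq c/2\}$ and showing $\hP(\mathcal{E})\to 1$ via Hoeffding's inequality and a union bound: $\hP(\mathcal{E}^c) \leq 2N\exp(-Tc^2/2)\to 0$, since $T$ grows linearly in $N$ and the exponential dominates the linear factor. On $\mathcal{E}$ the remainder coefficient is bounded by $2/c^3$, so the remainder is dominated by $(2/c^3)\,N^{-1/2}\sum_i(\widehat{\pi}_i-\pi_i)^2 = o_p(1)$; combining with $\hP(\mathcal{E}^c)\to 0$ makes the remainder $o_p(1)$ unconditionally. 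Collecting the two arms yields $\sqrt{N}(\widehat{\tau} - \widetilde{\tau}) = o_p(1)$, and Slutsky delivers $\sqrt{N}(\widehat{\tau} - \tau)\overset{d}{\to}\mathcal{N}(0,V)$. Conceptually, the whole argument rests on the observation that the incidental-parameters bias induced by the nonlinearity of $x\mapsto 1/x$ is $O(1/T)$ per unit, which is negligible at the $\sqrt{N}$ rate precisely because we do not attempt the faster $\sqrt{NT}$ rate.
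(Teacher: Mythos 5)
Your proposal is correct and follows essentially the same route as the paper's proof: a Taylor expansion of the inverse weight in $\widehat{\pi}_i$, with the linear term killed by a conditional-independence mean/variance calculation and the quadratic remainder controlled by $(\widehat{\pi}_i-\pi_i)^2 = O_p(1/T)$, all negligible at the $\sqrt{N}$ rate because $\sqrt{N}/T \to 0$. Two minor differences are to your credit: you handle the unboundedness of $\widehat{\pi}_i^{-1}$ in the remainder via Hoeffding plus a union bound rather than the paper's neighborhood-supremum argument, and you correctly identify the $t=T$ own-term bias $\E[D_{iT}(\widehat{\pi}_i-\pi_i)\mid\alpha_i]=\pi_i(1-\pi_i)/T$, which the paper's treatment of its term (II) glosses over when it asserts exact mean zero (the term's mean is in fact $O(\sqrt{N}/T)$, which still vanishes).
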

Proposition~\ref{prop:simple} shows that when the sample size does not dominate the number of time periods and when the history of treatment provides independent information about the propensity score of interest, using the unit-specific propensity scores in an IPW estimator leads to a consistent and asymptotically normal estimator with variance the same as if the propensity scores were known. This is in spite of the fact that there may be unmeasured confounding between treatment and the outcome across units. Furthermore, unlike most extant ways of adjusting for time-constant unmeasured confounding, this result only requires repeated measurements of the treatment, not of both treatment and the outcome.

How does this approach avoid the incidental parameters problem? It is instructive to review a Taylor expansion of this estimator around the true value of propensity scores. Letting $\widehat{\tau}_1$ be the first term in~\ref{eq:feasible-IPW}, we can expand it as
\[
\sqrt{N}(\widehat{\tau}_1 - \tau_1) = \frac{1}{\sqrt{N}} \sum_{i=1}^N U_{i1} - \frac{1}{\sqrt{N}} \sum_{i=1}^N \frac{(\widehat{\pi}_i - \pi_i)}{\pi_i} U_{i1} + o_p(1/T)
\]
The first term on the left-hand side of this expansion is the treatment group's contribution to the influence function when the propensity scores are known, and the second term is the first-order effect of estimating the propensity scores. Because of the within-unit ignorability and independence of treatment over time conditional on $\alpha_i$, we can show that the second term is $O_p(1/\sqrt{T})$, so it can be ignored in a first-order asymptotic approximation. In a typical panel data setting, there would be an outcome in every time period and typical methods require a $\sqrt{NT}$ rate of convergence, in which case the second term would contribute bias and variance to the asymptotic distribution. While we have focused on the effect of the last blip of treatment, it is straightforward to extend this result to investigate the effect of some subset of the treatment history so long as the length of the subset is fixed as $N,T \rightarrow \infty$. The fixed window ensures that the incidental parameters bias does not dominate the asymptotic distribution.

One advantage of the unit-specific weights is that they act as \emph{balancing weights} for any baseline covariate, measured or unmeasured. In particular, the true propensity scores,  $\pi_i = \hP(D_{i}=1 \mid \mathcal{F}_i)$, where $\mathcal{F}_i$ is the set of all possible time-constant variables influencing treatment. That is, for any $Z_i \in \mathcal{F}_i$, $\E[D_{iT}Z_i/\pi_i] = \E[Z_i]$ and
\[
\E\left[\frac{D_{iT}Z_i}{\pi_i} - \frac{(1 - D_{iT})Z_i}{1-\pi_i} \right] = 0,
\]
so that the true weights balance the baseline covariates on average.

This setting was highly restricted to demonstrate the core intuition behind the IPTW-FE approach. There were no covariates, time-varying or otherwise, and we relied on a nonparametric estimator for the propensity score. In the rest of the paper, we expand our scope to focus on the more general class of marginal structural models.

\section{Review of Marginal Structural Models and Inverse Propensity Score Weighting}\label{s:review-msm}

The combination of marginal structural models and inverse probability of treatment weighting was developed by \cite{Robins98b} and has since become an important method across a number of scientific domains. \cite{RobHerBru00} provides a general introduction to the method. A robust methodological literature has built up around the method, focusing on stabilizing the construction of the weights \citep{ColHer08, XiaMooAbr13, ImaRat15, KalSan19}, using machine learning methods to make estimation more flexible \citep{Diavan11, BruLogJar15}, or developing doubly robust versions of the approach \citep{BanRob05, RotLeiSue12}. Our contribution to this literature is to show how these methods may be applied when a researcher suspects there may be time-constant unmeasured confounding.

We now review the use of IPTW in the context of marginal structural models. Two main features distinguish it from the previous discussion. First, we allow for potential outcomes to be a function of treatment history. Let $\overline{D}_{it} = (D_{i1},\ldots,D_{it})$ be the treatment history up to time $t$ and $\underline{D}_{it} = \{D_{it}, \ldots, D_{iT}\}$ be the history from $t$ to $T$.  Let $\overline{D}_i = \overline{D}_{iT}$, where these take values in $\mathcal{D}_T \in \{0,1\}^T$. We define the potential outcomes as  $Y_i(\overline{d})$, where $\overline{d} \in \mathcal{D}_T$, which is the outcome that unit $i$ would have if they had followed treatment history $\overline{d}$. Second, we allow for the possibility of time-varying confounders. Let $X_{it}$ be a vector of time-varying covariates that are causally prior to $D_{it}$. Even though this vector is labelled in period $t$, we assume it can include lags of contemporaneous covariates. We define $\overline{X}_{it}$, $\underline{X}_{it}$ and $\underline{X}_i$ similarly to the treatment history.

The MSM methodology is based on a sequential ignorability assumption that treatment at time $t$ is unrelated to the potential outcomes conditional on (some function of) the history of treatment and the time-varying covariates. In particular, there is some vector of time-varying covariates, such that,
\[
Y_i(\overline{d}) \indep D_{it} \mid X_{it}, \overline{D}_{i,t-1},
\]
where again $X_{it}$ may include information about the history of the covariates as well. This assumption is a time-varying version of a selection-on-observeables assumption applied repeatedly to treatment in each period.  One drawback with this approach in the social sciences is that units may have differing baseline probabilities of treatment based on traits that are difficult to measure, as we had in the last section. In the context of negativity, this may occurs if a candidate is facing strong challenger that is not fully captured in polling data. This limitation of sequential ignorability is one motivation for developing the fixed-effects approach we introduce below.

A marginal structural model is a model for the marginal mean of the potential outcomes as a function of the treatment history:
\[
\E[Y_i(\overline{d})] = g(\overline{d}; \gamma_0)
\]
The dimensionality of $\overline{d}$ grows quickly in $T$, so even when $T$ is moderate, $g(\cdot)$ will usually impose some parametric restrictions on the response surface. Even if these modeling restrictions are correct, the observed conditional expectation function $\E[Y_i\mid \overline{D}_i = \overline{d}] \neq g(\overline{d}; \gamma_0)$ due to confounding by $X_{it}$.  On the other hand, including the covariates in the conditional expectation will lead to  post-treatment bias so that $\E[Y_i \mid \overline{D}_i = \overline{d},\overline{X}_i] \neq g(\overline{d}; \gamma_0)$. \citet{Robins99} showed how an inverse probability of treatment weighting scheme could avoid these two biases. In particular, he showed that a weighted conditional expectation can recover the parameters of the MSM when the weights are proportional to the inverse of the conditional probability of a the unit's treatment history given their covariate history. Let $\pi_t(\overline{d}_{t-1}, x) = \hP(D_{it} = 1 \mid \overline{D}_{i,t-1} = \overline{d}_{t-1}, X_{it} = x)$ and let $\pi_{it} = \pi_t(\overline{D}_{i,t-1}, X_{it})$. Then, the IPTW weights for our MSM become
\begin{equation}\label{eq:msm-weights}
W_i = \prod_{t=1}^T \pi_{it}^{-D_{it}}(1 - \pi_{it})^{-(1-D_{it})}
\end{equation}
With these weights, \citet{Robins99} showed that $\E[\bm{1}\{\overline{D}_i = \overline{d}\} W_{i} Y_i] = g(\overline{d}; \gamma_0)$.

In observational studies, the propensity scores used to construct the weights are not usually known to the analyst and so must be estimated. The standard approach to this in the MSM literature is to specify a parametric model for treatment and estimate its parameters via maximum likelihood. Define the a parametrization of the propensity score  $\pi(x, \overline{d}_t; \beta)$, where we define the true value of this parameter as $\pi(x, \overline{d}_t; \beta_0) = \hP(D_{it} = 1 \mid X_{it} = x, \overline{D}_{it} = \overline{d}_t)$. We then define the estimated propensity scores as $\widehat{\pi}_{it} = \pi(X_{it}, \overline{D}_{it};\widehat{\beta})$, where  $\widehat{\beta}$ is the MLE:
\[
  \begin{gathered}
    \widehat{\beta} = \argmax_{\beta} \frac{1}{NT} \sum_{i=1}^N \sum_{t=1}^{T} \ell_{it}(\beta),\\
    \ell_{it}(\beta) = D_{it}\log \pi(X_{it}, \overline{D}_{i,t-1}; \beta) + (1 - D_{it})\log(1 - \pi(X_{it}, \overline{D}_{i,t-1}; \beta)).
  \end{gathered}
\]
These estimated propensity scores can then be used to generate estimated weights, $\widehat{W}_i = \prod_{t=1}^T \widehat{\pi}_{it}^{-D_{it}}(1 - \widehat{\pi}_{it})^{-(1-D_{it})}$. With these estimated weights, an IPTW estimator for the MSM can be constructed by solving the empirical version of the following estimating equation for $\gamma$:
\[
\E\left\{\widehat{W}_{i} h(\underline{D}_i)(Y_i - g(\underline{D}_i;\gamma)) \right\} = 0.
\]
For example, when $g(\cdot)$ is the identity function, then this approach reduces to weighted least squares. \cite{Robins98} established this procedure as producing a consistent and asymptotically normal estimator for the parameters of the MSM.

The weights in~\eqref{eq:msm-weights} can often be unstable when the true or estimated propensity scores are close to one or zero, which can lead to highly variable estimates. A common practice in this case is to include a stabilizing numerator that is the marginal probability of the treatment history, $\overline{\pi}_{it} = \hP(D_{it} = 1 \mid \overline{D}_{i,t-1})$. In this case, the stabilized weights become
\[
\widetilde{W}_i = \prod_{t=1}^T \left(\frac{\overline{\pi}_{it}}{\pi_{it}}\right)^{D_{it}}\left( \frac{1 - \overline{\pi}_{it}}{1 - \pi_{it}} \right)^{1-D_{it}}.
\]
Another common practice is to trim the weights to additionally guard against unstable causal parameter estimates \citep{ColHer08}.

\section{Fixed-effect Propensity Score Estimators}\label{s:iptw-fe}

\subsection{Setting and Assumptions}

We now focus on estimating propensity scores with fixed effects for MSMs when time-constant unmeasured confounding exists. As with the traditional MSM case, we assume that  $(Y_i, \overline{D}_{i}, \overline{X}_i)$ are independent across observations. In order to adjust for unit-specific heterogeneity, we do require restrictions beyond the typical MSM case. First and foremost, we focus on  marginal structural models for a treatment history of a fixed length rather than the entire treatment history. In particular, we focus on MSMs of the form $\E[Y_{i}(\underline{d}_{T-k})] = g(\underline{d}_{T-k};\gamma)$, where $\underline{d}_{T-k} = (d_{T-k},\ldots, d_T)$, $k$ is fixed, and the parameter vector $\gamma$ is of length $J$. This restriction is a choice of the quantity of interest, not a substantive assumption about the effect of the treatment history. By the usual consistency assumption, we can define these ``shorter'' potential outcomes as $Y_i(\underline{d}_{T-k}) \equiv Y_i(\overline{D}_{i,T-k-1}, \underline{d}_k)$,
so that treatment history before $k$ lags acts more like a baseline confounder. Compared to typical MSM practice, the main limitation of this restriction is to rule out functional forms where the cumulative sum of the treatment history is included as part of the MSM.

We now describe the key identification assumption of the IPTW-FE approach, which is a combination of the unit-specific randomized experiments assumption of Section~\ref{s:illustration} and the standard MSM framework in Section~\ref{s:review-msm}. Let $\underline{X}_{i,t+1}(d_t)$ and $\underline{D}_{i,t+1}(d_t)$ represent the potential outcomes of the future covariate and treatment histories for an intervention on the treatment process at time $t$.
\begin{assumption}[Unit-specific Sequential Ignorability]\label{a:sequential-ignorability} For all $i$ and $t$,
\[\{Y_i(\overline{d}), \underline{X}_{i,t+1}(\overline{d})\} \indep D_{it} \mid \overline{X}_{it}, \overline{D}_{i,t-1} = \overline{d}_{t-1}, \alpha_i.\]
\end{assumption}
\noindent Assumption~\ref{a:sequential-ignorability} states that conditional on the unit-specific effect, the treatment history, and (a function of) the covariate history, treatment is independent of future potential outcomes for both the outcome and the covariate process. In essence, treatment is randomized with respect to future covariates and the outcome, conditional on the past and time-constant features of the unit. This assumption allows for both time-varying confounding by measured covariates and time-constant confounding by measured and unmeasured covariates. We do assume that the time-constant unmeasured confounding can be captured by the unidimensional, $\alpha_i$, which might represent a combination of several unit-specific factors.

Assumption~\ref{a:sequential-ignorability} involves potential outcomes histories of various lengths, $Y_i(\underline{d}_t)$, but above we defined our main marginal structural models in terms of treatment histories of fixed length, $\E[Y_i(\underline{d}_{T-k})]$. Thus, the requirements of sequential ignorability go beyond the treatments of interest in the marginal structural model and apply to the potential outcomes for the entire treatment history. This allows for the fixed-effect propensity score estimators to be consistent even without a no-carryover assumption that would assume that treatment before $T-k$ has no effect on the outcome.

We now turn to defining the propensity scores with fixed effects. We assume a parametric model for the this propensity score (up to the unmeasured heterogeneity) with a function of the treatment history acting as the covariates: $V_{it} = b(\overline{X}_{it}, \overline{D}_{i,t-1})$ where $V_{it}$ is a $k \times 1$ vector. Let $\hP(D_{it} = 1 \mid \overline{X}_{it} = \overline{x}_{t}, \overline{D}_{i,t-1} = \overline{d}_{t-1}, \alpha_i) = \hP(D_{it} = 1 \mid V_{it} = v, \alpha_i) =  \pi_{it}(v; \beta, \alpha_i)$ be the probability of treatment given the covariate and treatment histories, where $v = b(\overline{x}_{t}, \overline{d}_{t-1})$. We define $\pi_{it}(\beta, \alpha_i) = \pi(V_{it}; \beta, \alpha_i)$ and assume the following functional form:
\[
\pi(v; \beta, \alpha_i) = F(v^{\top}\beta + \alpha_{i}),
\]
where $F(\cdot)$ is the cumulative distribution function of either the standard normal or standard logistic distribution. One restriction here is that the function $b(\cdot)$ is time-constant which allows us to pool information across the time dimension effectively. Let $\alpha_{0} = (\alpha_{10}, \ldots, \alpha_{N0})$ and $\beta_0$ be the values of the parameters that generate the treatment process. In particular, we assume that these values are the solution to the following population conditional maximum likelihood condition
\begin{equation}
  \label{eq:pop-mle}
(\beta_0, \alpha_0) = \argmax_{(\beta, \alpha) \in \mathbb{R}^{d_\beta + N}} \sum_{i=1}^N\sum_{t=1}^T \E[\ell_{it}(\beta, \alpha) \mid \alpha_i],
\end{equation}
where the expectation is with respect to the distribution of the data conditional on the unobserved effect. Under Assumption~\ref{a:panel-data} below, these parameters will be identified from the model. This explicitly assumes a correctly-specified parametric model for the propensity scores, though this assumption is common in applications of MSMs and the current assumption is still considerably weaker than those settings since it allows for unit-specific heterogeneity.

To account for the time-constant unmeasured confounding, we construct weights with these unit-specific effects to  estimate the  MSMs. In particular, we use the following weights
\[
W_{i}(\beta, \alpha_i) = \prod_{j=1}^k \left( \frac{1}{\pi_{i,T-j}(\beta, \alpha_i)} \right)^{D_{i,T-j}} \left( \frac{1}{1-\pi_{i,T-j}(\beta, \alpha_i)} \right)^{1 - D_{i,T-j}},
\]
where we only take the product over the last $k$ time periods because are quantities of interest focuses on those periods. As with the standard MSM case, we can replace the numerator with the marginal probability of the treatment history, $\overline{\pi}_{it}$, which can stabilize the variance of the estimator without affecting identification.

The IPTW approach to estimating this MSM is to rely on the estimating equation
\[
U_i(\gamma, \beta, \alpha_i) = W_i(\beta, \alpha_{i})h(\underline{D}_{i,T-k})(Y_i - g(\underline{D}_{i,T-k}; \gamma)),
\]
where $h(\cdot)$ is a function with $J$-length output, chosen by the researcher.  For example, if $Y_i$ is continuous and $g$ is linear and additive, it is common to use $h(\underline{D}_{i,T-k}) = \underline{D}'_{i,T-k}$. Under the fixed-effects sequential ignorability assumption and the MSM has the follow restriction:
\begin{equation}\label{eq:msm-population-condition}
    \E\left[ U_i(\gamma_0, \beta_0, \alpha_{i0})\right] = 0
  \end{equation}
This is an identification result because the restriction identifies the causal parameters, $\gamma_0$, solely in terms of sample quantities (up to the propensity score parameters). This result follows the standard g-computation algorithm with the unit-specific heterogeneity, $\alpha_i$, included in the place of a baseline covariate \citep{Robins99, Robins00}.

Of course, to build a feasible estimator for $\gamma$, we need estimates of the propensity score parameters. We will consider first-step estimators that find the MLEs for these parameters. To allow for fixed effects in these models, we make the following assumptions.

\begin{assumption}[Treatment Regularity Conditions]\label{a:panel-data}
Let $\nu > 0$, $\mu > 4(8 + \nu)/\nu$, and $\mathcal{B}_{0}(\epsilon)$ is an $\epsilon$-neighborhood of $(\beta_0, \alpha_{i0})$ for all $i,t,N,T$.
\begin{enumerate}[label=(\roman*), ref=\theassumption(\roman*)]
\item \label{a:rates} (Asymptotics) Let $N,T \rightarrow \infty$ such that $N/T \rightarrow \rho$ where $0 < \rho < \infty$.
\item \label{a:sampling} (Sampling) For all $N$ and $T$,  $\{(Y_i(\underline{d}), \underline{D}_i, \underline{X}_i, \alpha_i): i = 1,\ldots,N\}$ are i.i.d. across $i$. Letting $Z_{it} = (D_{it}, X_{it})$ for $t = 1,\ldots,T$ and $Z_{i,T+1} = (Y_i(\underline{d}))$, then for each $i$, $\{Z_{it}: t = 1,\ldots,T+1\}$ is $\alpha$-mixing conditional on $\alpha_i$ with mixing coefficients satisfying $\sup_i a_i(m) = O(m^{-\mu})$ as $m\rightarrow \infty$ where
  \[
    a_i(m) \equiv \sup_t \sup_{A\in\mathcal{A}_{it}, B \in \mathcal{B}_{i,t+m}} |\hP(A \cap B) - \hP(A)\hP(B)|,
  \]
  and $\mathcal{A}_{it}$ is the sigma field generated by $(Z_{it}, Z_{i,t-1}, \ldots)$ and $\mathcal{B}_{i,t}$ is the sigma field generated by $(Z_{it}, Z_{i,t+1}, \ldots)$.
  \item \label{a:bounded-derivatives} We assume that $(\beta, \alpha) \mapsto \ell_{it}(\beta, \alpha)$ is four-times continuously differentiable over $\mathcal{B}_{0}(\epsilon)$ almost surely.  The partial derivatives of $\ell_{it}(\beta, \alpha)$ with respect to the elements of $(\beta, \alpha)$  are  bounded in absolute value uniformly over $(\beta,\alpha) \in \mathcal{B}_{0}(\epsilon)$ by a function $M(Z_{it}) > 0$ almost surely and $\max_{i,t} \E[M(Z_{it})^{8+\nu}]$ is almost surely uniformly bounded over $N,T$.
  \item \label{a:bounded-propensity-scores} For all $i,t$ we have $\pi_{it}(\beta,\alpha)$ bounded away from $0$ and $1$ uniformly over $(\beta, \alpha) \in \mathcal{B}_0(\epsilon)$.
  \item \label{a:concavity} (Concavity) For all $N$, $T$ $(\beta, \alpha) \mapsto \ell_{it}(\beta, \alpha)$ is strictly concave over $\mathbb{R}^{\text{dim}(\beta)+1}$ almost surely.  Furthermore, there exists $b_{\min}$ and $b_{\max}$ such that for all $(\beta, \alpha) \in \mathcal{B}_0(\epsilon)$, $0 < b_{\min} \leq -\E[V_{it\alpha} \mid \alpha_i] \leq b_{\max}$ almost surely uniformly over $i$, $t$, $N$, and $T$.
  \end{enumerate}
\end{assumption}

Assumption~\ref{a:panel-data} mostly derives from \citet{FerWei16}, who used them to establish the asymptotic properties of nonlinear panel models with unit- and time-specific effects, though we focus only on unit effects. Assumption~\ref{a:rates} establishes the large-N, large-T asymptotic framework, which has been widely used for nonlinear panel models in econometrics \citep{HahNew04, AreHah07, FernandezVal09, HahKue11,  FerWei16, FerWei18}. The strong mixing process in Assumption~\ref{a:sampling}, which allows us to rely on laws of large numbers and central limit theorems in the time dimension. This assumption is substantially weaker than independence over time or even stationarity. In particular,  it allows for time trends which are a common feature of propensity score models in MSMs. The i.i.d. nature of the distribution of the data and the fixed effects across units is common to IPTW approaches and allows us to take averages over the unit-specific heterogeneity and has been used before for average partial effects in nonlinear panel models \citep{FerWei16}. It is possible to replace this assumption with stationarity of $X_{it}$ over time, but this would rule out lagged treatment in the propensity score model along with time trends.

Assumption~\ref{a:bounded-derivatives} requires the log-likelihood of the propensity score model and its derivatives to be sufficiently smooth to allow for the higher-order asymptotic expansions we use. With a binary response, this assumption could be replaced by a moment condition on the distribution of the covariates. We invoke a locally uniform version of positivity in Assumption~\ref{a:bounded-propensity-scores}. Note that Assumption~\ref{a:bounded-propensity-scores} implicitly restricts $\alpha_i$, since if $\alpha_i$ were completely unrestricted, then we may have $\pi_{it} \rightarrow \infty$. Finally, Assumption~\ref{a:concavity} ensures that the MLE is identified and should be satisfied in the usual parametric models used for binary data when the covariates, $V_{it}$ vary in the time and unit dimensions.

In addition to these assumptions on the treatment process, we also make the following regularity conditions on the marginal structural model and outcome.

\begin{assumption}[Outcome Regularity Conditions]\label{a:msm-assumptions}
  Let $\nu > 0$ and $\mathcal{B}_{0}(\epsilon)$ is an $\epsilon$-neighborhood of $(\gamma_0, \beta_0, \alpha_{i0})$ for all $i, N$.
  \begin{enumerate}[label=(\roman*), ref=\theassumption(\roman*)]
  \item \label{a:bounded-moments} (Bounded outcome moments) $\E[|Y_{i}(d)|^{4+\nu}]$ and $\E[|Y_{i}(d)|^{4+\nu} \mid \alpha_i, \underline{D}_i, X_{iT}]$ are bounded by finite constants, uniformly over $i$.
  \item \label{a:msm-regularity} (MSM regularity) The parameters $\phi = (\gamma, \beta, \alpha) \in \interior \Phi$ where $\Phi$ is a compact, convex subset of $\mathbb{R}^{J+R+1}$ with $J = \dim(\gamma)$ and $R = \dim(\beta)$. The map $\gamma \mapsto U_i(\gamma, \beta, \alpha)$ is continuously differentiable over $(\gamma, \beta, \alpha) \in \mathcal{B}_{0}(\epsilon)$ with $\E[\sup_{\gamma \in \mathcal{B}_{0}(\epsilon)} \Vert\partial_{\gamma} U_i(\gamma, \beta, \alpha)\Vert] < \infty$.
 \end{enumerate}
\end{assumption}

Assumption~\ref{a:bounded-moments} ensures the potential outcomes have sufficiently bounded (conditional) moments. Assumptions~\ref{a:msm-regularity} is a set of standard regularity conditions for the marginal structural model.

\subsection{Proposed Method}

We propose a two-step approach to estimating the parameters of the marginal structural model using inverse probability of treatment weighting. These two steps are:
\begin{enumerate}
\item Estimate the parameters of the propensity score model $(\widehat{\beta}, \widehat{\alpha}_i)$ using conditional maximum likelihood treating the unit-specific effects $\alpha_i$ as fixed parameters to be estimated. Construct estimated weight $W_i(\underline{D}_{i,T-k}; \widehat{\beta}, \widehat{\alpha}_i)$.
\item Pass the estimated weights to a weighted estimating equation $N^{-1}\sum_{i=1}^NU_i(\widehat{\gamma}, \widehat{\beta}, \widehat{\alpha}_i) = 0$ to obtain estimates of the MSM parameters, $\gamma$.
\end{enumerate}

The first step in this procedure can be implemented with a sample conditional maximum likelihood estimator. Letting $\widehat{\alpha} = (\widehat{\alpha}_1, \ldots, \widehat{\alpha}_N)$, we have
\begin{equation}
  \label{eq:mle}
  (\widehat{\beta}, \widehat{\alpha}) = \argmax_{(\beta, \alpha) \in \mathbb{R}^{d_{\beta} + N}} \E_{NT}[\ell_{it}(\beta, \alpha_i)]
\end{equation}
Under these assumptions we use the following maximum likelihood estimators:
\[
  \widehat{\beta} = \argmax_{\beta} \frac{1}{NT} \sum_{i=1}^N \sum_{t=1}^{T} \ell_{it}(\beta, \widehat{\alpha}_i(\beta)), \qquad
  \widehat{\alpha}_i(\beta) = \argmax_{\alpha} \frac{1}{T} \sum_{t=1}^T \ell_{it}(\beta, \alpha).
\]
These maximum likelihood estimates are subject to the usual incidental parameters problem that results in bias that shrinks as $T\rightarrow\infty$. Even when $N$ and $T$ grow at the same rate, \citet{HahNew04} showed that these types of MLE estimators are not $\sqrt{NT}$-consistent and a large literature has developed proposing several bias correction techniques \citep{AreHah07, FerWei18}. We sidestep these issues in our own results because we target the slower convergence rate of $\sqrt{N}$ that is typical in the MSM literature.

To obtain estimates of the MSM parameters, $\widehat{\gamma}$, we use the sample version of the MSM moment condition~\eqref{eq:msm-population-condition},
\[
\frac{1}{N} \sum_{i=1}^N W_i(\underline{D}_{i,T-k};\widehat{\beta}, \widehat{\alpha}_i)h(\underline{D}_{i,T-k})(Y_i - g(\underline{D}_{i,T-k};\widehat{\gamma})) = \frac{1}{N} \sum_{i=1}^N U_i(\widehat{\gamma}, \widehat{\beta}, \widehat{\alpha}_i) = 0.
\]
This estimator depends on the link function for the marginal structural model and a function $h(\cdot)$. One particularly straightforward estimator in this class is  weighted least squares for the identity link with continuous outcomes. Often $h(\cdot)$ can be chosen to enhance the efficiency of the estimator \citep{Robins99}, but we do not explore that here.
We now show in Theorem~\ref{thm:msm-iptw-fe} that under regularity conditions and the above assumptions, this estimator is consistency and asymptotically normal.
The proof is in Appendix~\ref{appendix:proof-thm-iptwfe}.

\begin{theorem}\label{thm:msm-iptw-fe}
Under Assumptions~\ref{a:sequential-ignorability}, \ref{a:panel-data}, and~\ref{a:msm-assumptions}, $\widehat{\gamma} \inprob \gamma_0$ and
\begin{equation}
\sqrt{N}(\widehat{\gamma} - \gamma_0) \xrightarrow{d} N(0, V_{\gamma_0}),
\end{equation}
where $V_{\gamma_0} = G^{-1}\E[U_iU_i^{\top}]G^{-1}$ and
\[
G = \E\left\{ \frac{\partial U_i(\gamma, \beta, \alpha)}{\partial \gamma} \right\}_{\gamma=\gamma_0} \qquad U_i = U_i(\gamma_0, \beta_0, \alpha_{i0}).
\]
\end{theorem}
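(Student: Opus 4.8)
The plan is to treat the estimator as a two-step Z-estimation problem and to control the effect of the first-step incidental-parameter estimates through a higher-order expansion, exploiting Assumption~\ref{a:rates} ($N/T\to\rho$), the mixing in Assumption~\ref{a:sampling}, and the fixed window $k$. First I would record the first-step behavior. Under Assumption~\ref{a:panel-data} the conditional MLE inherits the large-$T$ analysis of \citet{FerWei16}: the common parameter satisfies $\widehat\beta-\beta_0=O_p((NT)^{-1/2})$ with an $O(1/T)$ incidental-parameter bias, and for each $i$ the fixed effect admits a Bahadur-type expansion $\widehat\alpha_i-\alpha_{i0}=T^{-1}\sum_t\psi_{it}+b_{\alpha,i}/T+o_p(1/T)$, where $\psi_{it}$ is the (conditionally mean-zero given $\alpha_i$) influence function built from the score $\partial_\alpha\ell_{it}$, with the inverse Hessian controlled by the concavity bound in Assumption~\ref{a:concavity}. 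I would also record the uniform rate $\max_i|\widehat\alpha_i-\alpha_{i0}|=O_p(\sqrt{(\log N)/T})$, which together with Assumption~\ref{a:bounded-propensity-scores} keeps the estimated weights bounded on a neighborhood of the truth with probability approaching one.

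Consistency of $\widehat\gamma$ then follows from a standard argument: the sample estimating equation converges uniformly over the compact $\Phi$ of Assumption~\ref{a:msm-regularity} to its population counterpart, the plugged-in weights converge to the true weights by first-step consistency and the smoothness in Assumption~\ref{a:bounded-derivatives}, and the population moment has the unique zero $\gamma_0$ by the identification in~\eqref{eq:msm-population-condition}. For asymptotic normality I would Taylor-expand $N^{-1}\sum_i U_i(\widehat\gamma,\widehat\beta,\widehat\alpha_i)=0$ around $(\gamma_0,\beta_0,\alpha_{i0})$ to obtain
\[
\sqrt{N}(\widehat\gamma-\gamma_0)=-G^{-1}\Big[\tfrac{1}{\sqrt N}\sum_i U_i+\Delta_\beta+\Delta_\alpha\Big]+o_p(1),
\]
where the $\gamma$-derivative term converges to $G$ by the law of large numbers and Assumption~\ref{a:msm-regularity}, and $\tfrac{1}{\sqrt N}\sum_i U_i\xrightarrow{d} N(0,\E[U_iU_i^\top])$ by a Lindeberg CLT across the i.i.d.\ units (Assumption~\ref{a:sampling}, moments from Assumption~\ref{a:bounded-moments}).

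The easy corrections dispatch by rate accounting. The $\beta$-piece is $\Delta_\beta=\big(N^{-1}\sum_i\partial_\beta U_i\big)\sqrt N(\widehat\beta-\beta_0)$, and since $\sqrt N(\widehat\beta-\beta_0)=O_p(\sqrt N\cdot(NT)^{-1/2})+O(\sqrt N/T)=O_p(T^{-1/2})+O(\rho/\sqrt N)=o_p(1)$, it vanishes. The quadratic remainder in $(\widehat\alpha_i-\alpha_{i0})^2=O_p(1/T)$ contributes $O_p(\sqrt N/T)=o_p(1)$ by the same calculation, and the four-times differentiability in Assumption~\ref{a:bounded-derivatives} bounds the cubic and quartic remainders.

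The main obstacle is the $\alpha$-correction $\Delta_\alpha=\tfrac{1}{\sqrt N}\sum_i(\partial_\alpha U_i)(\widehat\alpha_i-\alpha_{i0})$. Substituting the Bahadur expansion splits it into a bias piece and a stochastic piece. The bias piece $\tfrac{1}{\sqrt N}\sum_i(\partial_\alpha U_i)\,b_{\alpha,i}/T$ has mean of order $\sqrt N/T=\rho/\sqrt N\to0$ and variance $O(1/T^2)$, so it is $o_p(1)$. The delicate piece is $\tfrac{1}{\sqrt N}\sum_i(\partial_\alpha U_i)\,T^{-1}\sum_t\psi_{it}$; writing $A_i$ for its $i$-th summand, the units are independent, so I must show $\sqrt N\,\E[A_i]\to0$ and $\mathrm{Var}(A_i)=O(1/T)$. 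The variance bound is immediate since $T^{-1}\sum_t\psi_{it}=O_p(T^{-1/2})$ by the conditional CLT for mixing sequences. The mean is where Assumptions~\ref{a:sequential-ignorability} and~\ref{a:sampling} do the work: $\partial_\alpha U_i$ depends on the treatment, covariate, and outcome configuration only in the last $k$ (fixed) periods together with the potential outcome, while $\psi_{it}$ depends on $(D_{it},V_{it})$. Sequential ignorability removes the selection part of the conditional covariance, and the $\alpha$-mixing coefficients make $\E[(\partial_\alpha U_i)\psi_{it}\mid\alpha_i]$ decay in $|T-t|$ at a summable rate, so the time-average $T^{-1}\sum_t\E[(\partial_\alpha U_i)\psi_{it}\mid\alpha_i]=O(1/T)$ and hence $\sqrt N\,\E[A_i]=O(\sqrt N/T)\to0$. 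This is precisely the step that exploits the fixed-length MSM and the slower $\sqrt N$ target to defeat the incidental-parameters problem. Collecting terms gives $\sqrt N(\widehat\gamma-\gamma_0)=-G^{-1}\tfrac{1}{\sqrt N}\sum_i U_i+o_p(1)$, yielding $V_{\gamma_0}=G^{-1}\E[U_iU_i^\top]G^{-1}$.
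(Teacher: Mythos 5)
Your proposal follows essentially the same route as the paper's proof: a two-step Z-estimator expansion, a Bahadur-type expansion of $\widehat{\alpha}_i$ with influence function $\psi_{it}$, and a split of the $\alpha$-correction into a conditional-mean piece controlled by sequential ignorability (g-computation) plus summable mixing-covariance decay, and a stochastic piece of order $O_p(T^{-1/2})$, with the $\beta$-correction and second-order remainders dispatched by rate accounting under $N/T\to\rho$ and the fixed window $k$. The only quibble is your claimed uniform rate $\max_i|\widehat{\alpha}_i-\alpha_{i0}|=O_p(\sqrt{(\log N)/T})$, which the paper's polynomial moment conditions only deliver as $O_p(T^{-3/8})$ via a maximal moment inequality; either rate suffices everywhere it is used.
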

We can build a consistent variance estimator in the usual way with $\widehat{V}_{\gamma} = \widehat{G}^{-1}\widehat{\Omega}\widehat{G}^{-1}$, where
\[
\widehat{G} = \frac{1}{N} \sum_{i=1}^N \frac{\partial \widehat{U}_i}{\partial \gamma}, \qquad \widehat{\Omega} = \frac{1}{N} \sum_{i=1}^N \widehat{U}_i\widehat{U}_i^{\top}, \qquad \widehat{U}_i = U_i(\widehat{\gamma}, \widehat{\beta}, \widehat{\alpha}_i).
\]
This is a standard sandwich estimator for estimators based on estimating equations.

Theorem~\ref{thm:msm-iptw-fe} establishes that the IPTW-FE for MSMs is asymptotically normal and that we can asymptotically ignore the estimation of the weights. In the standard IPTW case, the estimation of the weights does impact the distribution of the MSM estimates. Here, however, the estimation of the weights doesn't affect the first-order asymptotic distribution because we are using $NT$ observations to estimate the propensity score parameters but only using a fraction of the observations, $Nk$, to create the weights, where $k$ is fixed as $T\to\infty$. Thus, the $\widehat{\beta}$ converges much faster than $\widehat{\gamma}$ and so we can ignore its estimation.

In typical nonlinear panel models, plugging in noisy estimates of the fixed-effect parameters leads to a bias that converges to 0 slowly enough to create asymptotic bias. In our setting, however, the strong mixing property of the treatment process that this bias fades over time and so allows us to ignore the estimation of the fixed-effect parameters as well. In the literature on nonlinear panel models, there is a similar result for estimating partial effects, or differences in the conditional expectation, as opposed to parameters of the nonlinear model. For example, \cite{FerWei18} showed how these average partial effects can converge at a slower rate with parameter estimation not having a first-order effect on the asymptotic distribution \cite[see also][]{FerWei16}. The current approach is similar since we are only interested in the parameters of the weighting model insofar as they provide consistent estimates of the IPTW weights.

This result establishes that it is possible to adjust for unmeasured baseline confounding in MSMs when the time dimension is long and provides sufficiently new information within units. The quality of this adjustment will depend on both how long the panels are and how severe the unmeasured heterogeneity is. A second-order expansion of the estimator shows that second-order bias (which can be ignored in our asymptotic analysis) is inversely related to the propensity scores. Thus, strong unit-specific heterogeneity will push propensity scores close to zero or one and create more finite-sample bias. Longer panels helps with this finite-sample bias since these second-order terms will be of order $O_P(1/\sqrt{T})$. A fruitful avenue for future research would be to use analytic or computational approaches like the jackknife to adjust for these second-order terms.

\subsection{Trimming Weights}

One drawback of the IPTW-FE approach is that the fixed-effect parameters of the propensity score model are not identified when units are either always treated or always control. Even if when we maintain the population-level positivity assumption, this in-sample positivity violation means that some units will have undefined weights. We propose three ways to address this issue. First, one could simply omit the no-treatment-variance units and estimate the parameters of the MSM for the units that have at least one treated or control period. This is the simplest procedure but could induce confounding bias, especially if the $\alpha_i$ have a nonlinear relationship with the outcome. Second, we could use an ad hoc rule for imputing propensity scores of the no-treatment-variance units.
For example, we could set these units to have $\widehat{\pi}_{it} = 0.01$ if $D_{it} = 0$ for all $t$ and $\widehat{\pi}_{it} = 0.99$ if $D_{it} = 1$ for all $t$. Depending on the lag length $k$ in the MSM and the exact trimming this may lead to extreme weights which themselves could require trimming. Alternatively, one could place bounds on the range of the unit-specific effects in the MLE estimation to $\alpha_i \in [a_0, a_1]$ and set the estimates of those effects as $\widehat{\alpha}_i = a_0$ or $\widehat{\alpha}_i = a_1$ if $D_{it} = 0$ or $D_{it} = 1$ for all $t$, respectively. The amount of trimming of the weights in this approach amounts to a bias-variance trade-off similar to weight trimming in standard IPTW estimators for MSMs \citep{ColHer08}.

Finally, one alternative approach to handling positivity violations would be to focus on a different quantity of interest. \cite{Kennedy19} proposed estimating the effect of incremental propensity score interventions, which are interventions that shift the propensity score rather than set treatment histories to specific values. The identification and estimation of these effects does not depend on positivity, and under the assumption of a correctly specified propensity score model, a simple inverse probability weighting estimator is available \citep[p. 650]{Kennedy19}.

\section{Simulation Evidence}\label{s:simulation}

In this section, we conduct simulation studies to evaluate the finite sample performance of the proposed approach.

\subsection{Setup}

We simulate a balanced panel of $n$ units with $T$ time points
where the number of units varies $n \in \{200, 500, 1000, 3000\}$.
We fix the ratio of the number of units to the number of time periods $ n/T = \rho \in \{5, 10, 50\}$.
This setup mimics the key asymptotic approach of our theoretical results, and the larger value of $\rho$ implies the small number of time points, $T  = n / \rho$.
The treatment sequence is generated as a function of individual unobserved effect $\alpha_{i}$, the past treatment $D_{i,t-1}$
and the time-varying covariates, $\*X_{it}$.
\begin{equation*}
D_{it} \sim \text{Bernoulli}(
  \text{expit}(\alpha_{i} + \varphi D_{i,t-1} + \bm{\beta}^{\top}\mathbf{X}_{it})
)
\end{equation*}
where $\text{expit}(x) = 1 / (1 + \exp(-x))$ is the inverse logistic function.
The individual heterogeneity is drawn from a uniform distribution with support on $[-a, a]$ for $a \in \{1, 2\}$.
The value of $a$ is chosen such that the variance of individual heterogeneity explains
$25\%$ ($a = 1$) or $50\%$ ($a = 2$) of the  variance of the linear predictor.
The time-varying covariates $\*X_{it}$ are generated exogenous to the treatment, drawn from the multivariate normal distribution,
$\*X_{it} \sim \mathcal{N}(-1/2\bm{1}, \Sigma)$
where $\Sigma_{jj} = 1$ and $\Sigma_{jj'} = 0.2$ for $j \neq j'$.
Finally, we set $\varphi = 0.3$ and $\bm{\beta} = (-0.5, -0.5)$
when the number of covariates is two or $\bm{\beta} = (-0.5, -0.5,  1.0, -0.5)$ when the number of covariates is four.

The outcome is generated by the linear model
with individual unobserved variable $\alpha_{i}$,
the final treatment $D_{iT}$,
the cumulative treatments $\sum^{T-3}_{t = T-1}D_{it}$
and the average of the time-varying covariates, $\overline{\*X}_{i} = \sum^{T}_{t=1}\*X_{it}/T$,
all of which are generated in the previous step.
\begin{equation*}
Y_{i} = \alpha_{i} +
        \tau_{F}D_{iT} +
        \tau_{C}\sum^{T-3}_{t = T-1}D_{it} +
        \bm{\gamma}^{\top}\overline{\mathbf{X}}_{i} +
        \epsilon_{i},\quad
        \epsilon_{i} \sim \mathcal{N}(0, 1)
\end{equation*}
where we set $\tau_{F} = 1$, $\tau_{C} = 0.3$, and $\bm{\gamma} = (1.0, 0.5)$ or
$\bm{\gamma} = (1.0, 0.5,  1.0,  1.0)$ depending on the number of covariates used in each simulation.

\begin{figure}[H]
  \centerline{
  \includegraphics[width=\textwidth]{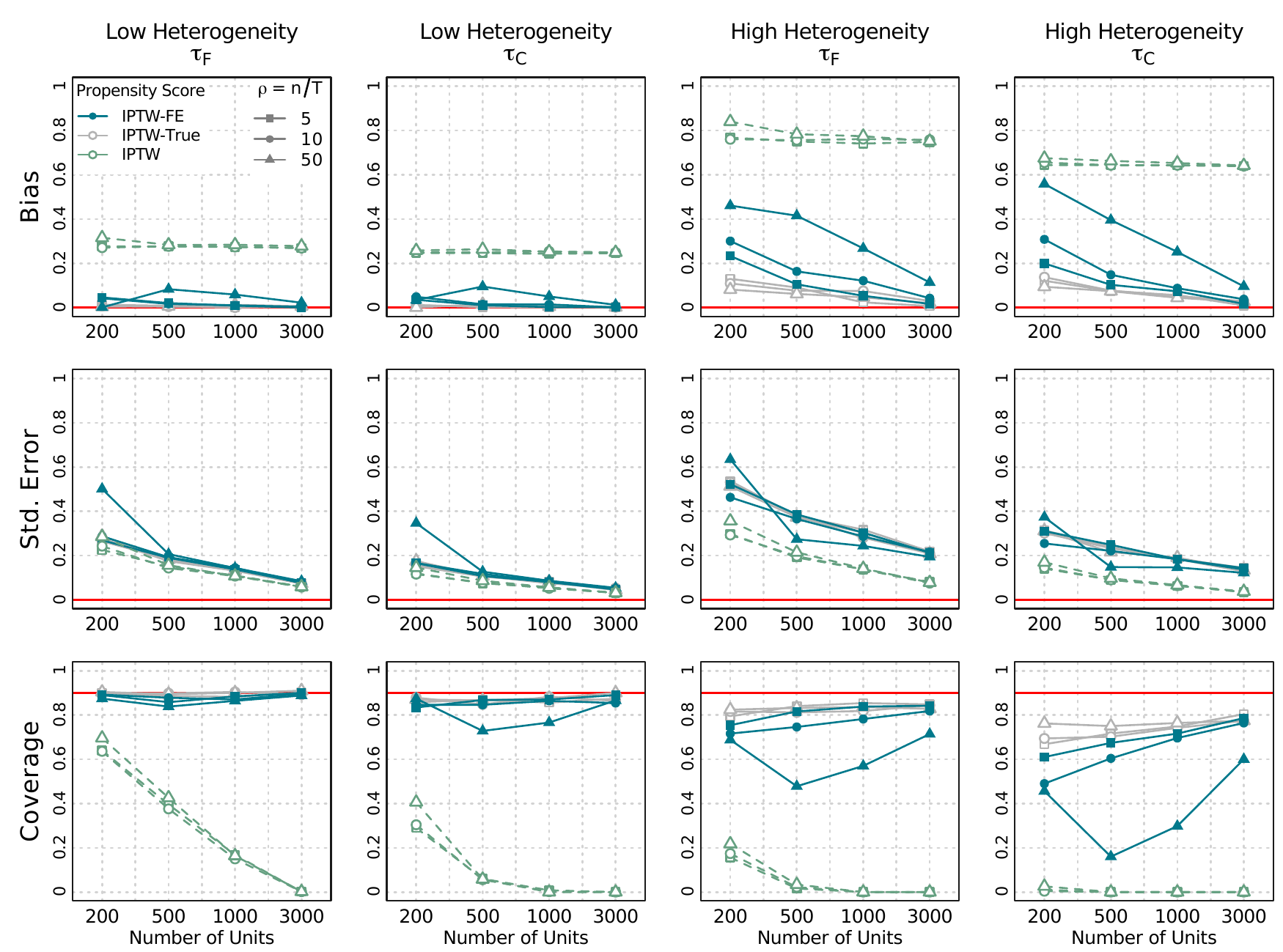}
  }
  \caption{
  Bias, standard error (Std. Error) and coverage probability of 90\% confidence intervals (Coverage) for the estimation of the final period effect $\tau_{F}$ and the cumulative effect $\tau_{C}$ under the ``low'' heterogeneity ($a = 1$) -- first two columns --
  and the ``high'' heterogeneity ($a = 2$) -- last two columns -- scenario.
  Solid lines in \textcolor{simblue}{blue} show the proposed estimator
  (\IPTWFE),
  solid lines in \textcolor{simgrey}{grey} show the esti\&ator based on the true propensity score (\IPTWT), and
  dashed lines in \textcolor{simgreen}{green} show the estimator based on the estimated propensity score without fixed effects (\IPTW).
  Shapes correspond to the $n$ to $T$ ratio $\rho$ such that
  squares represent $\rho = 5$ (the largest number of time periods),
  circles represent $\rho = 10$, and
  triangles represent $\rho  = 50$ (the smallest number of time periods)
  }
  \label{fig:msm-cov2}
\end{figure}

\subsection{Results}

We compared the performance of the proposed method in terms of estimating two causal quantities: the final period effect $\tau_{F}$
and the cumulative effect $\tau_{C}$.
We estimate two quantities together in the framework of weighted least square,
\begin{equation*}
(\widehat{\tau}_{F}, \widehat{\tau}_{C}) = \argmin_{\tau_{F}, \tau_{C}}
\sum^{n}_{i=1} \widehat{W}_{i}
\bigg\{
  Y_{i} - \alpha - \tau_{F}D_{iT} - \tau_{C}\sum^{T-3}_{t = T-1}D_{it}
\bigg\}^{2}
\end{equation*}
where $\widehat{W}_{i}$ is constructed as described in the previous section.
The variance of $\widehat{\tau}_{F}$ and $\widehat{\tau}_{C}$ is estimated using the standard sandwich formula with the HC2 option.

In addition to the fixed effect approach,
we consider two other strategies to obtain the weights $\widehat{W}_{i}$
as benchmarks to the proposed method.
First, we use the true propensity score to construct the weights.
Second, the estimated propensity score without the fixed effect
is used to construct weights.
We expect that the weights with known propensity scores
is least biased and the weights without the fixed effect is most biased.

Figure~\ref{fig:msm-cov2} shows the results
for the two-covariate case.
Bias (first row), standard errors (second row)
and coverages (third row) are computed based on  500 Monte Carlo simulations.
Additional simulation results are presented in Appendix~\ref{appx:additional-simulation}.
The first two columns correspond to the ``low'' heterogeneity case
where the support of the fixed effect is $[-1, 1]$,
whereas the last two columns correspond to the ``high'' heterogeneity
scenario where the support of $\alpha_{i}$ is set to $[-2, 2]$.
Solid lines in blue show the proposed estimator
(\IPTWFE),
solid lines in grey show the estimator based on the true propensity score (\IPTWT), and
dashed lines in green show the estimator based on the estimated propensity score without fixed effects (\IPTW).
Shapes correspond to the $n$ to $T$ ratio $\rho$ such that
squares represent $\rho = 5$ (the largest number of time periods),
circles represent $\rho = 10$, and
triangles represent $\rho  = 50$ (the smallest number of time periods).

We can see that under the low heterogeneity setting,
where the unobserved individual heterogeneity explains roughly 25\% of the variance of the treatment assignment,
the bias of the proposed estimator (\IPTWFE) is indistinguishable from the estimator that is based on the true propensity score (\IPTWT)
and the confidence interval estimates maintain the nominal coverage across different values of $n$ and $\rho$.
Under this scenario, even a case of $n = 200$ and $T = 4$,
the proposed method performs well.

When the variance of the individual heterogeneity is high ($a = 2$)
such that it explains roughly 75\% of the variance of the treatment assignments, the proposed estimator shows relatively larger bias compared with \IPTWT, while bias of the estimator without fixed effects (\IPTW) is substantially larger.
Under this setting, the coverage results are mainly driven by the bias, thus the figure shows that as $n$ increases, the coverage results also improves thanks to the reduction in bias.
We can also see that in general the estimator without fixed effects
shows smaller standard errors than \IPTWFE.
This implies that the proposed method (\IPTWFE) trade-off the efficiency with lower bias.
Finally, we highlight that small Monte Carlo bias is observed even for \IPTWT\ under this scenario. This is possibly due to the high variability of the weights, which are produce of inverse probabilities over four time periods with stabilization.

Overall, these results point to two key tensions in controlling for time-constant unmeasured heterogeneity through fixed effects in the propensity score models. First, high degrees of unmeasured heterogeneity in the propensity scores may lead to near violations of the positivity assumption that could lead to the kind of instability we see when $a = 2$. Second, larger magnitudes of heterogeneity may require more time periods to achieve good finite sample performance compared to when the heterogeneity is relatively small.

\section{Empirical Application: The Effectiveness of Negative Advertising}\label{s:application}

We now apply these techniques to estimate the effectiveness of negative advertising in U.S.\ Senate and Gubernatorial elections. We build on \citet{Blackwell13} who investigated the same question using an MSM approach without fixed effects for elections over the period from 2000 to 2008. We expand the data to include additional Senate races from 2010 until 2016, and focus on the effect of Democratic candidate negativity on three outcomes: the Democratic percentage of the two-party vote, percent of the voting-eligible population casting Democratic votes, and percentage of the voting-eligible population casting Republican votes. The latter two outcomes use the voting-eligible population as a denominator, allow us to explore the possibility that Democratic negativity mobilizes each party differently. We organize the data into a race-week panel and focus on the time period between the primary election for the race and the general election, so that we have $N=201$ with the length of the campaign ranging from 8 to 40 weeks with a median of 20 weeks. Our marginal structural model is:
\[
\E[Y_i(\underline{d})] = \gamma_0 + \gamma_1\left(\sum_{k=0}^4 d_{T-k}\right),
\]
where the time index here is weeks of the campaign. The main quantity of interest, $\gamma_1$, can be interpreted as the effect of additional week of negative advertising in the last five week on the outcomes. The data on advertising comes from the Wesleyan Media Project, which provides data on the number, timing, and tone of political television ads for most recent elections. The outcomes come from the MIT Election Data Science Project, and we collected polling data from several polling aggregators.

We apply several different estimation approaches to this MSM: the proposed IPTW-FE approach, a standard IPTW approach without fixed effects, and a naive approach that ignores time-varying covariates all together. For each of these methods, we consider ``vanilla'' MSMs that only include treatment history and augmented MSMs that additional control for baseline covariates. For the weighting model, we included various time-varying covariates: average Democratic share of the two-party preferences in polls in the previous week (and the square of this term), the average percentage reporting undecided or voting for third-party candidates in the previous week, measures of Republican negativity over the last six weeks, the cumulative number of ads shown by the Democrat and Republican (and their squared terms). For the fixed effects approach, we additionally include a race fixed effect term in the specification. For the IPTW approach without fixed effects, we include several baseline covariates including the length of the general election campaign,  baseline polling, whether the Democrat is the incumbent, and an indicator for the type of elected position.

\begin{figure}[t!]
  \centerline{\includegraphics[scale=0.75]{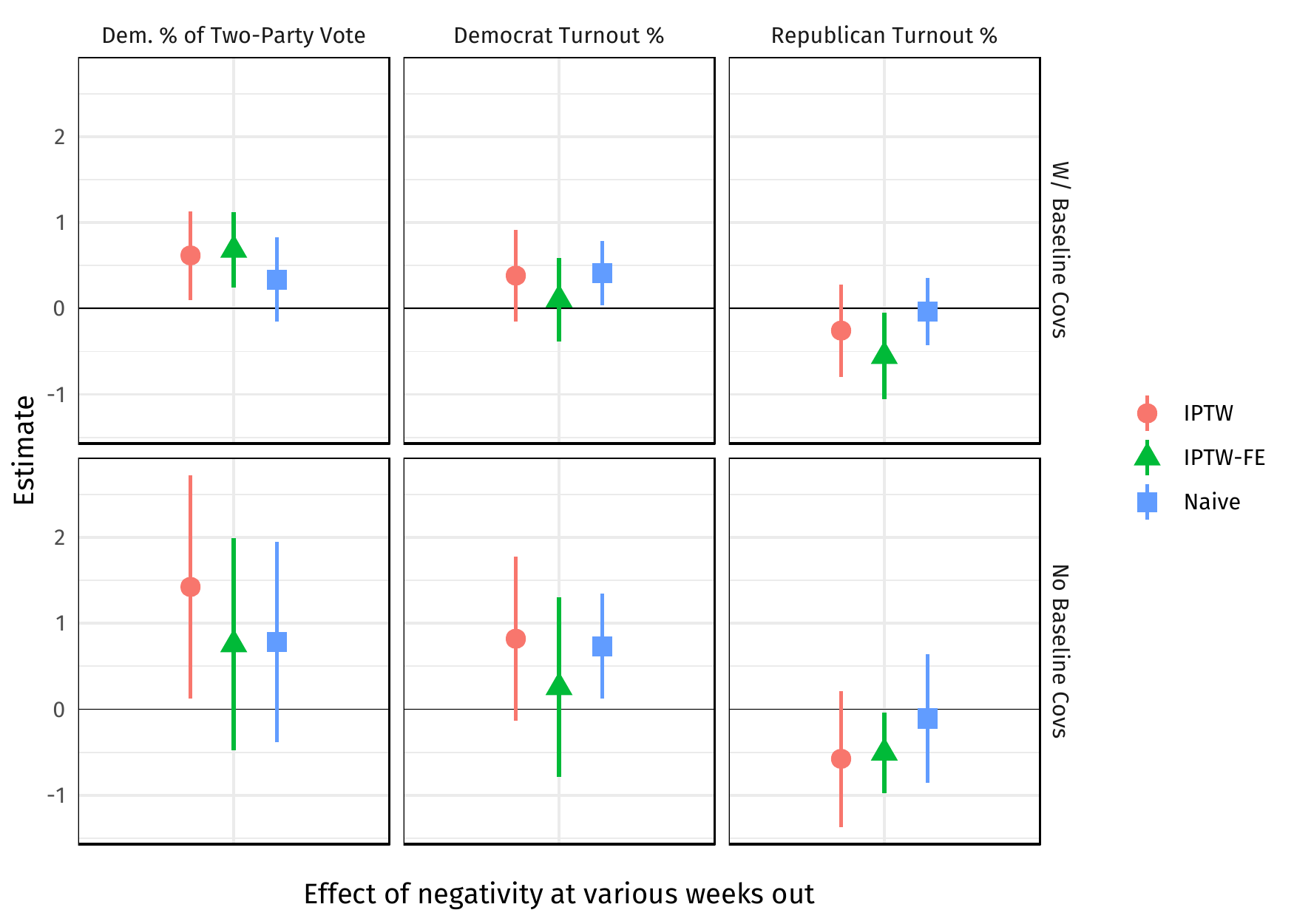}}
  \caption{Estimated effects of the number of weeks of negativity in the last five weeks of the campaign with different methods.}\label{fig:negativity-coefplot}
\end{figure}

Figure~\ref{fig:negativity-coefplot} shows the results of these methods for each of the outcomes. Substantively, the methods generally agree that there is a positive effect of Democratic negativity on Democratic electoral performance, though there is disagreement between the methods on why. The results of the standard IPTW approach indicates that this electoral edge comes from a positive effect on Democratic turnout, with relatively little impact on Republican turnout. The IPTW-FE approach, on the other hand, indicates that negativity actually demobilized Republican turnout without a large effect on Democratic turnout, raising the overall Democratic share of the vote. One other key difference between the IPTW and IPTW-FE approaches is that the fixed effect approach is much less sensitive to the inclusion of baseline covariates, whereas the IPTW sees relatively large changes in the point estimates across those specifications.

\section{Conclusion}\label{s:conclusion}

In this paper, we have show how it is possible to control for time-constant unmeasured confounding in marginal structural models by using a fixed effects approach to estimate the propensity score of the time-varying treatment. We derived the large-sample properties of this estimator under an asymptotic setup where the number of time periods and the number of units grows together. Simulations showed that the proposed method outperforms a naive approach that omits fixed effects and performs well overall especially when the magnitude of the heterogeneity is moderate. We applied this approach to estimating the time-varying effect of negative advertising on election outcomes in United States statewide elections and found that the fixed effect approach led to different conclusions about how negativity affects vote shares. An obvious place for future research would be to apply these methods to data where we have repeated measurements of the outcomes as well as the treatment. In those situations it may be possible to develop doubly-robust estimators under fixed effects assumptions.

\bibliographystyle{chicago}
\bibliography{psfe}

\clearpage
\appendix
\section{Proofs}


\subsection{Unit-specific Randomized Experiments}
\label{appendix:proof-prop-simple}

\begin{proof}[Proposition~\ref{prop:simple}]

Under standard expansion of M-estimators, we have
\begin{align*}
0
&= \frac{1}{N}\sum^{N}_{i=1}U_{i}(\widehat{\pi}_{i}, \widehat{\tau}) \\
&= \frac{1}{N}\sum^{N}_{i=1}U_{i}(\widehat{\pi}_{i}, \tau)
    + \frac{1}{N}\sum^{N}_{i=1} \frac{\partial}{\partial \tau}
        U_{i}\bigg|_{\tau = \tau_{0}, \pi = \widehat{\pi}_{i}}
        (\widehat{\tau} - \tau) \\
&= \frac{1}{N}\sum^{N}_{i=1}U_{i}(\widehat{\pi}_{i}, \tau)
    - \frac{1}{N}\sum^{N}_{i=1} \frac{D_{iT}}{\widehat{\pi}_{i}}
        (\widehat{\tau} - \tau)
\end{align*}

Then, we have
\begin{align*}
\sqrt{N}(\widehat{\tau} - \tau)
&= \Bigg(
  \frac{1}{N}\sum^{N}_{i=1} \frac{D_{iT}}{\widehat{\pi}_{i}}
\Bigg)^{-1}
\frac{1}{\sqrt{N}}\sum^{N}_{i=1}U_{i}(\widehat{\pi}_{i}, \tau)
\end{align*}

Under the standard Taylor expansion of $\widehat{\pi}_i$ around the true value $\pi_i$ we have the following:
\[
\frac{1}{\sqrt{N}}\sum^{N}_{i=1}U_{i}(\widehat{\pi}_{i}, \tau) = \underbrace{\frac{1}{\sqrt{N}} \sum_{i=1}^N U_i}_{\textrm{(I)}} - \underbrace{\frac{1}{\sqrt{N}} \sum_{i=1}^N \frac{(\widehat{\pi}_i - \pi_i)}{\pi_i} U_i}_{\textrm{(II)}} + \underbrace{\frac{1}{\sqrt{N}} \sum_{i=1}^N \frac{(\widehat{\pi}_{i}- \pi_i)^2}{\overline{\pi}^3}D_{iT}(Y_i(1) - \tau)}_{\textrm{(III)}}
\]
where $\overline{\pi}_i$ is a mean value between $\widehat{\pi}_i$ and $\pi_i$. The first term, (I), in this expansion is just the influence function for the infeasible estimator and so converges to $\mathcal{N}(0, \E[U^{2}_{i}])$.

For (II), we first rewrite this as:
\[
  \begin{aligned}
    \text{(II)} &= \sqrt{N}\left( \frac{1}{NT} \sum_{i=1}^N\sum_{t=1}^T \frac{D_{it}-\pi_i}{\pi_i}U_i \right)
    \end{aligned}
\]
The potential outcomes only depend on the last  period, which implies that $Y_{i}(d) \indep D_{it} \mid \alpha_i$ for all $t$, so we have
\begin{align*}
  \E\left\{\frac{D_{it}-\pi_i}{\pi_i}U_i  \right\}
  & = \E\left\{ \frac{D_{iT}(D_{it}-\pi_i)}{\pi^2_i}(Y_i(1)-\tau) \right\} \\
  & = \E\left\{ \frac{1}{\pi_i} \E[(D_{it}-\pi_i)(Y_i(1)-\tau) \mid \alpha_i] \right\} \\
  &= \E\left\{  \frac{1}{\pi_i}\E[(D_{it}-\pi_i)\mid \alpha_i]\E[(Y_i(1)-\tau) \mid \alpha_i] \right\} = 0,
\end{align*}
where the second equality follows from sequential ignorability (i.e., applying the g-computational formula in this case) and the fourth equality uses independence of $\E[D_{it} - \pi_i\mid \alpha_i] = 0$.

Thus, (II) is a mean-zero random variable so we have (II) $=O_P(\V(\text{(II)})^{1/2})$. By bounded propensity scores, bounded outcome moments and standard arguments, we can show that $\V(\text{(II)}) = O(1/T)$ so that (II) is $O_p(1/\sqrt{T})$. 

For (III), we first note that $\max_i|\widehat{\pi}_i - \pi_i|^2 = O_p(1/T)$ and so we have
\[
  \begin{aligned}
    |(\text{III})| &\leq  \frac{1}{\sqrt{N}} \sum_{i=1}^N \left|   \frac{(\widehat{\pi}_{i}- \pi_i)^2}{\overline{\pi}^3}D_{iT}(Y_i(1) - \tau)\right| \\
    & \leq \sqrt{N}\max_i|\widehat{\pi}_i - \pi_i|^2 \frac{1}{N} \sum_{i=1}^N \left|   \overline{\pi}^{-3}D_{iT}(Y_i(1) - \tau)\right| \\
  \end{aligned}
\]
For the last term here, we note that
\[
  \begin{aligned}
    &\E\left[ \sup_{\pi_i \in \mathcal{B}(\epsilon)} \left( \frac{1}{N}\sum_{i=1}^N\left|   \overline{\pi}^{-3}D_{iT}(Y_i(1) - \tau)\right|  \right)^2 \right]    \\
    &\leq \E\left[ \left( \frac{1}{N}\sum_{i=1}^N\left|D_{iT}(Y_i(1) - \tau)\right|  \right)^2 \right]  && \text{(bounded PS)} \\
    &\leq \E\left[\left( \frac{1}{N}\sum_{i=1}^N\left|(Y_i(1) - \tau)\right|  \right)^2 \right] < \infty  && \text{(bounded moments)}
  \end{aligned}
\]
Note we are abusing notation slightly here by taking the supremum over all the propensity scores across all units.
By Markov inequality, this implies that
\[
  \sup_{\pi_i \in \mathcal{B}(\epsilon)} \frac{1}{N} \sum_{i=1}^N \left|\pi^{-3}D_{iT}(Y_i(1) - \tau)\right| = O_p(1),
\]
From here, we have that
\[
  \frac{1}{N} \sum_{i=1}^N \left|\overline{\pi}^{-3}D_{iT}(Y_i(1) - \tau)\right| = O_p(1),
\]
where $\overline{\pi}_i \in \mathcal{B}(\epsilon)$ with probability approaching 1 since it is between $\widehat{\pi}_i$ and $\pi_i$. Thus, we have
\[
|(\text{III})| \leq \sqrt{N}\max_i|\widehat{\pi}_i - \pi_i|^2 \frac{1}{N} \sum_{i=1}^N \left|   \overline{\pi}^{-3}D_{iT}(Y_i(1) - \tau)\right| = \sqrt{N}O_p(1/T)O_P(1) = O_p(1/\sqrt{T})
\]
All together, this implies that $\sqrt{N}(\widehat{\tau} - \tau_1) \xrightarrow{d} N(0, V)$.

\end{proof}


\subsection{Marginal Structural Models}\label{appendix:proof-thm-iptwfe}

\begin{proof}[Proof of Theorem~\ref{thm:msm-iptw-fe}]

Suppose now we are interested in an MSM $g(\underline{d}_k;\gamma)\E[Y_{iT}(\underline{d}_k)]$, where $\underline{d}_k = (d_{T-k},\ldots, d_T)$ and $k$ is fixed. The parameter vector $\gamma$ is of length $J$. Define the probability of a particular treatment history as a function of the propensity score parameters as
\[
W_{i}(\underline{d}_k; \beta, \alpha_i) = \prod_{j=0}^k \pi_{i,T-j}(\beta, \alpha_i)^{d_{T-j}}(1 - \pi_{i,T-j}(\beta, \alpha_i))^{1 - d_{T-j}}.
\]
Generally, we can define an MSM as the solution to the following:
\[
0 = \E\left\{ \frac{h(\underline{D}_{ik})(Y_i - g(\underline{D}_{ik}; \gamma))}{W_i(\underline{D}_{ik}; \beta, \alpha_i)} \right\}
\]
where $h(\cdot)$ is a function with $J$-length output, chosen by the researcher.  For example, if $Y_i$ is continuous and $g$ is linear and additive, it is common to use $h(\underline{D}_{ik}) = \underline{D}'_{ik}$. If we knew the propensity scores, we could derive an estimator of $\gamma$ based on the sample moment condition:
\[
  0 = \frac{1}{N} \sum_{i=1}^N \frac{h(\underline{D}_{ik})(Y_i - g(\underline{D}_{ik};\widehat{\gamma}))}{W_i(\underline{D}_{ik};\beta, \alpha_i)} = \frac{1}{N} \sum_{i=1}^N U_i(\widehat{\gamma}, \beta, \alpha_i)
\]
Because the propensity score is never known in observational studies, we define our estimator based on the estimated propensity scores:
\[
0 = \frac{1}{N} \sum_{i=1}^N \frac{h(\underline{D}_{ik})(Y_i - g(\underline{D}_{ik};\widehat{\gamma}))}{W_i(\underline{D}_{ik};\widehat{\beta}, \widehat{\alpha}_i)} = \frac{1}{N} \sum_{i=1}^N U_i(\widehat{\gamma}, \widehat{\beta}, \widehat{\alpha}_i)
\]

\paragraph{Consistency}

Note that within $(\beta, \alpha_i) \in \mathcal{B}_0(\epsilon)$,  $|\partial_{\beta} \ell_{it}(\beta, \alpha_i)| < M(Z_{it})$ implies that $|\partial_{\beta_k} \pi_{it}(\beta, \alpha_i)| < CM(Z_{it})$ for some constant $C$ since,
\[
|\partial_{\beta_k} \ell_{it}(\beta, \alpha_i)| = \abs*{\left(  \frac{D_{it} - \pi_{it}(\beta, \alpha_i)}{\pi_{it}(\beta, \alpha_i)(1 - \pi_{it}(\beta, \alpha_i))}\right) \partial_{\beta_k} \pi_{it}(\beta, \alpha_i)},
\]
and the propensity scores are uniformly bounded from below over $\mathcal{B}_0(\epsilon)$. The same applies to $|\partial_{\alpha} \pi_{it}(\beta, \alpha)|$. With these results, we can also bound the partial derivatives of the weights:
\[
  \begin{aligned}
    |\partial_{\beta_k} W_i(\beta, \alpha_i)| &= \abs*{\sum_{t=T-k}^T \partial_{\beta_k} \pi_{it}(\beta, \alpha_i) \sum_{s \neq t} \pi_{is}(\beta, \alpha_i)} \leq k(k - 1) |\partial_{\beta_k} \pi_{it}(\beta, \alpha_i)| \leq Ck(k-1)M(Z_{it}),
  \end{aligned}
\]
where $k$ is fixed as $N,T \to \infty$. Again, a similar expression holds for $|\partial_{\alpha}W_i(\beta, \alpha_i)|$. Thus, by the mean value theorem, we have
\[
  \begin{aligned}
    |W_i(\widehat{\beta}, \widehat{\alpha}_i) - W_i(\beta_0, \alpha_{i0})| &\leq \norm{\partial_{\beta} W_i(\beta, \alpha_i)}\;\norm{\widehat{\beta}-\beta_0} + | \partial_{\alpha} W_i(\beta, \alpha_i)| \; |\widehat{\alpha}_i - \alpha_{i0}| \\
    &\leq C_{\beta}M(Z_{it})\norm{\widehat{\beta}-\beta_0} + C_{\alpha}M(Z_{it})|\widehat{\alpha}_i - \alpha_{i0}|,
  \end{aligned}
\]
for some constants $C_{\beta}$ and $C_{\alpha}$.

Using this result we can uniformly bound the convergence of the estimating equation in terms of the parameters of the weighting model.
  \begin{align*}
    \sup_{\gamma \in \Gamma} &|N^{-1} \sum_{i=1}^N U_i(\gamma, \widehat{\beta}, \widehat{\alpha}_i) - U_i(\gamma, \beta_0, \alpha_{i0})|\\
    & \leq  N^{-1} \sum_{i=1}^N \sup_{\gamma \in \Gamma} | U_i(\gamma, \widehat{\beta}, \widehat{\alpha}_i) - U_i(\gamma, \beta_0, \alpha_{i0})| \\
    & \leq N^{-1} \sum_{i=1}^N \sup_{\gamma \in \Gamma}|h(\underline{D}_{ik})(Y_i - g(\underline{D}_{ik}, \gamma))(W_i(\widehat{\beta}, \widehat{\alpha}_i)^{-1} - W_i(\beta_{0}, \alpha_{i0}))| \\
    & \leq N^{-1} \sum_{i=1}^N \frac{|W_i(\widehat{\beta}, \widehat{\alpha}_i) - W_i(\beta_{0}, \alpha_{i0})|}{W_i(\widehat{\beta}, \widehat{\alpha}_i)W_i(\beta_{0}, \alpha_{i0})}\sup_{\gamma \in \Gamma}|h(\underline{D}_{ik})(Y_i - g(\underline{D}_{ik}, \gamma))| \\
    & < N^{-1} \sum_{i=1}^N |W_i(\widehat{\beta}, \widehat{\alpha}_i) - W_i(\beta_{0}, \alpha_{i0})|\sup_{\gamma \in \Gamma}|h(\underline{D}_{ik})(Y_i - g(\underline{D}_{ik}, \gamma))| \\
    & \leq C_{\beta}\norm{\widehat{\beta} - \beta_0} N^{-1} \sum_{i=1}^N |M(Z_{it})|
      \sup_{\gamma \in \Gamma}|h(\underline{D}_{ik})(Y_i - g(\underline{D}_{ik}, \gamma))| \\
    &\qquad +  C_{\alpha}\max_i\abs*{\widehat{\alpha}_i - \alpha_{i0}} N^{-1}
      \sum_{i=1}^N |M(Z_{it})|\sup_{\gamma \in \Gamma}|h(\underline{D}_{ik})(Y_i - g(\underline{D}_{ik}, \gamma))|
  \end{align*}

The fourth inequality holds because under Assumption~\ref{a:bounded-propensity-scores}, we have that $W_i(\underline{d}_k;\beta,\alpha) \in (\epsilon, 1 - \epsilon)$ where $\epsilon > 0$ near $(\beta_0, \alpha_{i0})$. By the bounded moments of $M(Z_{it})$ and $Y_i$, we have that $N^{-1} \sum_{i=1}^N |M(Z_{it})|\sup_{\gamma \in \Gamma}|h(\underline{D}_{ik})(Y_i - g(\underline{D}_{ik}, \gamma))| = O_p(1)$. Combined with the consistency of $\widehat{\beta}$ and $\widehat{\alpha}_i$ from Lemma~\ref{lem:alpha-expansion}, we have $\sup_{\gamma \in \Gamma} |N^{-1} \sum_{i=1}^N U_i(\gamma, \widehat{\beta}, \widehat{\alpha}_i) - U_i(\gamma, \beta_0, \alpha_{i0})| = o_p(1)$.
Thus, we have
\[
  \begin{aligned}
    |N^{-1}\sum_{i=1}^N U_i(\widehat{\gamma}, \widehat{\beta}, \widehat{\alpha}_i)|  &\leq    |N^{-1}\sum_{i=1}^N U_i(\widehat{\gamma}, \beta_0, \alpha_{i0})| +  |N^{-1}\sum_{i=1}^N U_i(\widehat{\gamma}, \widehat{\beta}, \widehat{\alpha}_i) - U_i(\widehat{\gamma}, \beta_0, \alpha_{i0})| \\
    &\leq |N^{-1}\sum_{i=1}^N U_i(\widehat{\gamma}, \beta_0, \alpha_{i0})| + \sup_{\gamma \in \Gamma}|N^{-1} \sum_{i=1}^N U_i(\gamma, \widehat{\beta}, \widehat{\alpha}_i) - U_i(\gamma, \beta_0, \alpha_{i0})| \\
    &= |N^{-1}\sum_{i=1}^N U_i(\widehat{\gamma}, \beta_0, \alpha_{i0})| + o_p(1).
  \end{aligned}
\]
This establishes that $\widehat{\gamma} \inprob \gamma_0$, by standard results of estimating equations.

\paragraph{Asymptotic expansion}

Let $G_{i}(\gamma, \beta, \alpha) = \partial U_i(\gamma, \beta, \alpha) / \partial \gamma$, then we have the following expansion:
\[
  0 = \frac{1}{\sqrt{N}} \sum_{i=1}^N U_i(\widehat{\gamma}, \widehat{\beta}, \widehat{\alpha}_i) = \frac{1}{\sqrt{N}} \sum_{i=1}^N U_i(\gamma_0, \widehat{\beta}, \widehat{\alpha}_i) + \sqrt{N}(\widehat{\gamma} - \gamma_0)\left( \frac{1}{N} \sum^{N}_{i=1}G_i(\overline{\gamma}, \widehat{\beta}, \widehat{\alpha}_i) \right)
\]
where $\overline{\gamma}$ is a value between $\widehat{\gamma}$ and $\gamma_0$. This implies the following influence-function representation for the estimator:
  \begin{align}
  \sqrt{N}(\widehat{\gamma} - \gamma_0) &= \left( \frac{1}{N} \sum_{i=1}^N G_i(\overline{\gamma}, \widehat{\beta}, \widehat{\alpha}_i) \right)^{-1} \frac{1}{\sqrt{N}} \sum_{i=1}^N U_i(\gamma_0, \widehat{\beta}, \widehat{\alpha}_i)  \\
  &= G^{-1} \frac{1}{\sqrt{N}} \sum_{i=1}^N U_i(\gamma_0, \widehat{\beta}, \widehat{\alpha}_i) + o_p(1)\label{eq:gamma-expand},
\end{align}
where $G = \E[G_i]$ and noting that function without arguments are evaluated at the true values of the parameters, $G_i = G_i(\gamma_0, \beta_0, \alpha_{i0})$. The second equality here follows from Lemma 2.4 of~\citet{NewMcF94} after noting  that $\overline{\gamma}$ is between $\widehat{\gamma}$ and $\gamma_0$, that $\widehat{\gamma}$, $\widehat{\beta}$, and $\widehat{\alpha}_i$ are all consistent and from Assumption~\ref{a:msm-assumptions}.

We can expand the $r$th element of $U_{i}$, $U_{ir}$ in Equation~\eqref{eq:gamma-expand} as
\begin{align*}
\frac{1}{\sqrt{N}}\sum_{i=1}^N U_{ir}(\gamma_{0}, \widehat{\beta}, \widehat{\alpha}_{i})
&=
\frac{1}{\sqrt{N}}\sum_{i=1}^N U_{ir}
+ \underbrace{
  \sqrt{N}(\widehat{\beta} - \beta_{0})^{\top} \left(\frac{1}{N}\sum_{i=1}^N\frac{\partial}{\partial \beta}U_{ir}
    \right)}_{\text{(I)}}
+ \underbrace{
  \frac{1}{\sqrt{N}} \sum_{i=1}^N (\widehat{\alpha}_{i} - \alpha_{i0}) \frac{\partial}{\partial \alpha}U_{ir}
  }_{\text{(II)}} \\
&\quad
+ \underbrace{
  \sqrt{N}(\widehat{\beta} - \beta_{0})^{\top}
    \left( \frac{1}{N} \sum_{i=1}^N \frac{\partial^{2}}{\partial \beta \partial \beta}U_{ir}(\gamma_{0}, \overline{\beta}, \overline{\alpha}_{i}) \right)
  (\widehat{\beta} - \beta_{0})
  }_{\text{(III)}} \\
&\quad + \underbrace{
  \frac{1}{\sqrt{N}} \sum_{i=1}^N (\widehat{\alpha}_{i} - \alpha_{i})^{2}
  \frac{\partial^{2}}{\partial \alpha \partial \alpha}U_{ir}(\gamma_{0}, \overline{\beta}, \overline{\alpha}_{i})
  }_{\text{(IV)}}\\
&\quad
+ \underbrace{
  \sqrt{N}(\widehat{\beta} - \beta)^{\top}
  \left( \frac{1}{N} \sum_{i=1}^N \frac{\partial^{2}}{ \partial \beta \partial \alpha}U_{ir}(\gamma_{0}, \overline{\beta}, \overline{\alpha}_{i})
  (\widehat{\alpha}_{i} - \alpha_{i0}) \right)
}_{\text{(IV)}}
\end{align*}
where $\overline{\alpha}_{i}$ is the value between $\widehat{\alpha}_{i}$ and $\alpha_{i0}$ ($\overline{\beta}$ is defined similarly).

\paragraph{First order terms}
We first show that Term (I) and (II) in the above expression are both $o_{p}(1)$.

We denote derivatives of $U_{ir}$ with subscripts so that,
\begin{align*}
U_{ir,\alpha}(\gamma, \beta, \alpha) = \frac{\partial}{\partial \alpha}U_{ir}(\gamma, \beta, \alpha), \quad
\text{and}\quad
U_{ir,\beta}(\gamma, \beta, \alpha) = \frac{\partial}{\partial \beta}U_{ir}(\gamma, \beta, \alpha)
\end{align*}
and simply write $U_{ir,\alpha}$ when evaluated at the true values of the parameters. Letting $h(\underline{D}_{ik})_{[r]}$ be the $r$th entry of that vector, the expression of $U_{ir,\alpha}$ is given by
\begin{align*}
U_{ir,\alpha}
&= \frac{h(\underline{D}_{ik})_{[r]} (Y_{i} - g(\underline{D}_{ik}; \gamma_{0}))}
        {W^{2}_{i}(\underline{D}_{ik}; \beta, \alpha_{i})}\\
&\quad \times
  \sum^{T}_{t=T-k}\bigg\{
  (2D_{it}-1)\frac{\partial}{\partial \alpha_{i}} \pi_{it}(\beta_{0}, \alpha_{i0})
  \prod_{\substack{s = T-k \\  s \neq t}}^T\pi_{is}(\beta_{0}, \alpha_{i0})^{D_{is}}
  \big[1 - \pi_{is}(\beta_{0}, \alpha_{i0}) \big]^{1 - D_{is}}
  \bigg\} \\
&= U_{ir}
\sum^{T}_{t=T-k}
\frac{(2D_{it} - 1)\pi_{it}V_{it}}
      {\pi^{D_{it}}_{it}(1 - \pi_{it})^{1-D_{it}}}\\
&= U_{ir} \sum^{T}_{t=T-k} \bigg[
  (2D_{it}-1) V_{it} \bigg\{ \frac{\pi_{it}}{1 - \pi_{it}} \bigg\}^{1-D_{it}}
\bigg]\\
&\equiv U_{ir}\overline{V}_{i}
\end{align*}
Here, we have used
\begin{align*}
\frac{\partial}{\partial \alpha_{i}} \pi_{it}(\beta_{0}, \alpha_{i0})
&= \pi_{it} \frac{\partial}{\partial \alpha_{i}} \log\ell_{it} = \pi_{it}V_{it}.
\end{align*}
Similarly, the expression of $U_{ir,\beta}$ can be derived as
\begin{align*}
U_{ir,\beta}
&=
U_{ir} \sum^{T}_{t=T-k} \bigg[
  (2D_{it}-1) S_{it} \bigg\{ \frac{\pi_{it}}{1 - \pi_{it}} \bigg\}^{1-D_{it}}
\bigg]\\
&\equiv U_{ir}\overline{S}_{i}
\end{align*}
where $S_{it}$ is the score function $\partial\log \ell_{it}/\partial \beta$.

To control (I), we use the following results:
\[
|\text{(I)}| \leq \sqrt{N}\Vert\widehat{\beta} - \beta_{0}\Vert \Bigg\Vert \frac{1}{N}\sum_{i=1}^N U_{ir}\overline{S}_i\Bigg\Vert
\]
Because $\sqrt{NT}\Vert \widehat{\beta} - \beta_{0} \Vert = O_p(1)$,  we have $\sqrt{N}\Vert\widehat{\beta} - \beta_{0}\Vert = O_p(1/\sqrt{T}) = o_p(1)$. Let $\overline{S}_{iq}$ be the $q$th entry in the $\overline{S}_i$ vector. Note that $\overline{S}_{iq}$ has bounded fourth moments by Lemma~\ref{lem:msm-second-order} since it a function of the $q$th score vector. Thus, for the second term bounding (I), we have:
\[
\E\left[ \left(\frac{1}{N} \sum_{i=1}^NU_{ir}\overline{S}_{iq}  \right)^2 \right]  \leq \frac{1}{N} \sum_{i=1}^N \E[(U_{ir}\overline{S}_{iq})^2] \leq \frac{1}{N} \sum_{i=1}^N \left( \E[U_{ir}^4] \right)^{1/2}\left( \E[\overline{S}_{iq}^4] \right)^{1/2} = O(1),
\]
where the first inequality holds because for any i.i.d. set of random variables $X_1,\ldots,X_n$, we have $\E[(n^{-1}\sum_i X_i)^2] \leq n^{-1}\sum_i \E[X_i^2]$. The second inequality holds by Cauchy-Schwarz, and the last equality holds by Assumption~\ref{a:bounded-derivatives} and~\ref{a:bounded-moments}. Because the same holds for all entries in $\overline{S}_i$, we have $\Vert N^{-1} \sum_{i=1}^N U_{ir}\overline{S}_i\Vert = O_p(1)$ by the Markov inequality and so (I) is $o_{p}(1)O_{p}(1) = o_{p}(1)$.

By Lemma~\ref{lem:alpha-expansion}, Term (II) can be written as
\begin{align*}
\text{(II)}
  &= \frac{1}{\sqrt{N}}\sum^{N}_{i=1}U_{ir,\alpha}(\widehat{\alpha}_{i} - \alpha_{i0}) = \frac{\sqrt{N}}{NT}\sum^{N}_{i=1}\sum^{T}_{t=1} U_{ir,\alpha}\psi_{it} + o_{p}(1)
  \\
  & = \underbrace{\frac{\sqrt{N}}{NT}\sum^{N}_{i=1}\sum^{T}_{t=1} \E[U_{ir,\alpha}\psi_{it} \mid \alpha_i]}_{\text{(II.a)}} + \underbrace{\frac{\sqrt{N}}{NT}\sum^{N}_{i=1}\sum^{T}_{t=1} \left( U_{ir,\alpha}\psi_{it} - \E[U_{ir,\alpha}\psi_{it} \mid \alpha_i] \right)}_{\text{(II.b)}}  + o_{p}(1),
\end{align*}
where the $o_p(1)$ term in the first line is due to $N^{-1}\sum_{i=1}^N U_{ir,\alpha}$ being $O_p(1)$ and the remained of the $\widehat{\alpha}_i$ expansion from Lemma~\ref{lem:alpha-expansion} being $\max_i|R_i| = o_p(T^{-1/2})$.

Note that $\psi_{it} = \E_T\{\E_{\alpha}[V_{it\alpha}]\}^{-1}V_{it}$.  In an abuse of notation, we define $\overline{V}_i(\underline{d}_k, D_{it})$ to be $\overline{V}_i$ with all covariates and the outcome replaced with their potential outcomes setting $\underline{d}_k$ and we leave $D_{it}$ as an argument to emphasize that this function depends on $D_{it}$. Furthermore, let $U_{ir,\alpha}(\underline{d}_k, D_{it}) = \overline{V}_i(\underline{d}_k, D_{it})h(d_k)_{[r]}(Y_i(\underline{d}_k, D_{it}) - g(\underline{d}_k;\gamma_0))$.  Then, applying the g-computational formula, we have for all $t < T - k$,
\[
\E[U_{ir,\alpha}\psi_{it} \mid \alpha_i] = \sum_{\underline{d}_k} \E_T\{\E_{\alpha}[V_{it\alpha}]\}^{-1} \E\left[V_{it}U_{ir,\alpha}(\underline{d}_k, D_{it}) \mid  \alpha_{i}  \right]
\]
Note that because $V_{it}$ is a score, we can use iterated expectations to show that $\E[V_{it}\mid \alpha_i]= 0$.
Thus, the inner expectation in the above expression is the covariance between $V_{it}$ and $U_{ir,\alpha}(\underline{d}_k, D_{it})$ conditional on $\alpha_i$. Using Lemma~\ref{lem:mixing-covariance}, we have
\[
  \begin{aligned}
    \abs*{
      \E\left[V_{it}U_{ir,\alpha}(\underline{d}_k, D_{it}) \mid  \alpha_{i}  \right]}
    & = \abs*{\text{Cov}(V_{it}, U_{ir,\alpha}(\underline{d}_k, D_{it}) \mid \alpha_i)} \\
    &  \leq 8 a(T-k-t)^{[1-1/(8+\nu)-1/(2+\nu)]} \\
    &\qquad \times \left[ \E[|V_{it}|^{8+\nu}] \right]^{1/(8+\nu)}\left[ \E[|U_{ir,\alpha}(\underline{d}_k, D_{it})|^{2+\nu}] \right]^{1/(2+\nu)} \\
    &  \leq C(T- k - t)^{-\mu[1-1/(8+\nu)-1/(2+\nu)]} \\
    &  \leq C(T - k - t)^{-4}
  \end{aligned}
\]
Thus, we have
\begin{align*}
  \abs*{\sum_{t=1}^T \E[U_{ir,\alpha}\psi_{it} \mid \alpha_i] } &\leq \sum_{t=1}^T \sum_{\underline{d}_k} \abs*{ \E_T\{\E_{\alpha}[V_{it\alpha}]\}^{-1} \E\left[V_{it}U_{ir,\alpha}(\underline{d}_k, D_{it}) \mid  \alpha_{i}  \right]}\\
  &\leq 2^k \sum_{t=1}^T\E\left\{\abs*{\E\left[V_{it}U_{ir,\alpha}(\underline{d}_k, D_{it}) \mid  \alpha_{i}  \right]}  \right\} \\
  & \leq C2^k \sum_{t=1}^T (T-k-t)^{-4} \\
  & \leq C2^k \sum_{m=1}^{\infty} m^{-4} = \frac{C2^k\pi^4}{90}
\end{align*}
Thus, we can establish that (II.a) is $O_p(1/\sqrt{T})$. By Lemma~\ref{lem:mixing-bounds} and the bounded moment conditions for the outcome and the partial derivatives, we have
\[
\frac{1}{NT} \sum_{i=1}^N\sum^{T}_{t=1} \left( U_{ir,\alpha}\psi_{it} - \E[U_{ir,\alpha}\psi_{it} \mid \alpha_i] \right) = O_p(1/\sqrt{NT}).
\]
This implies that (II.b) is $\sqrt{N}O_p(1/\sqrt{NT}) = O_p(T^{-1/2}) = o_p(1)$. Then, it follows that
\begin{equation*}
\text{(I)} + \text{(II)} = o_{p}(1) + o_{p}(1) = o_{p}(1).
\end{equation*}

\paragraph{Second order terms}

We will show that Term (III), (IV) and (V) are all $o_{p}(1)$.

Define the following second derivatives of the MSM estimating equation:
\[
U_{ir\alpha\alpha}(\gamma, \beta, \alpha) = \frac{\partial}{\partial \alpha} U_{ir\alpha}(\gamma, \beta, \alpha),
\quad
U_{ir\beta\beta}(\gamma, \beta, \alpha) = \frac{\partial}{\partial \beta}U_{ir\beta}(\gamma, \beta, \alpha),
\]
\[
U_{ir\beta\alpha}(\gamma, \beta, \alpha) = \frac{\partial}{\partial \alpha}U_{ir\beta}(\gamma, \beta, \alpha).
\]

Lemma~\ref{lem:alpha-expansion} implies estimation error in the fixed effects are uniformly bounded so (IV) is bounded as
\begin{equation}
  \label{eq:iv_bound}
|\text{(IV)}|
\leq
\sqrt{N}\max_i|\widehat{\alpha}_{i} - \alpha_{i}|^{2}
\bigg|\frac{1}{N}\sum^{N}_{i=1}U_{ir\alpha\alpha}(\gamma_{0}, \overline{\beta}, \overline{\alpha}_{i})\bigg|.
\end{equation}
Note that Lemma~\ref{lem:alpha-expansion} also implies that $\max_i|\widehat{\alpha}_i-\alpha|^2 =  O_p(T^{-3/4})$.

Next we bound the second term in~\eqref{eq:iv_bound}. First, we derive an expression for $U_{ir\alpha\alpha}$ using the derivation of $U_{ir\alpha}$ above:
\[
  \begin{aligned}
  U_{ir\alpha\alpha}(\gamma_{0}, \overline{\beta}, \overline{\alpha}_{i})
  & = U_{ir\alpha}(\gamma_{0}, \overline{\beta}, \overline{\alpha}_{i})\overline{V}_i(\overline{\beta}, \overline{\alpha}) +
  U_{ir}(\gamma_{0}, \overline{\beta}, \overline{\alpha}_{i})  \frac{\partial}{\partial \alpha} \overline{V}_i(\overline{\beta}, \overline{\alpha}) \\
  &= U_{ir}(\gamma_{0}, \overline{\beta}, \overline{\alpha}_{i})\overline{V}_{i\alpha}(\overline{\beta}, \overline{\alpha}),
  \end{aligned}
\]
where we define $\overline{V}_{i\alpha}(\beta,\alpha) = \overline{V}_i(\beta, \alpha)^2 + \partial \overline{V}_i(\beta, \alpha)/\partial \alpha$.

With this, we bound the second moment of the second term in a neighborhood around the truth:
\begin{align*}
\E\left[\sup_{(\alpha, \beta) \in B_{0}(\epsilon)}
\left( \frac{1}{N} \sum_{i=1}^N U_{ir\alpha\alpha}(\gamma_{0}, \beta, \alpha_{i})  \right)^2
\right]
&  = \E\left[\sup_{(\alpha, \beta) \in B_{0}(\epsilon)}
\left( \frac{1}{N} \sum_{i=1}^N U_{ir}(\gamma_{0}, \beta, \alpha_{i})\overline{V}_{i\alpha}(\beta, \alpha) \right)^2
\right] \\
&  \leq \E\left[\sup_{(\alpha, \beta) \in B_{0}(\epsilon)}
\left( \frac{1}{N} \sum_{i=1}^N |U_{ir}(\gamma_{0}, \beta, \alpha_{i})||\overline{V}_{i\alpha}(\beta, \alpha)| \right)^2
\right] \\
&  \leq \E\left[\sup_{(\alpha, \beta) \in B_{0}(\epsilon)}
\left( \frac{1}{N} \sum_{i=1}^N |Y_i - g(\underline{D}_{ik};\widehat{\gamma})||\overline{V}_{i\alpha}(\beta, \alpha)| \right)^2
\right] \\
&  \leq \E\left[
\left( \frac{1}{N} \sum_{i=1}^N |Y_i - g(\underline{D}_{ik};\widehat{\gamma})|M_i \right)^2
\right] \\
&  \leq  \frac{1}{N} \sum_{i=1}^N \E\left[
\left( |Y_i - g(\underline{D}_{ik};\widehat{\gamma})|M_i \right)^2
\right] \\
&\leq \frac{1}{N} \sum_{i=1}^N \left( \E[(Y_i - g(\underline{D}_{ik};\widehat{\gamma}))^4] \right)^{1/2}\left( \E[M_i^4] \right)^{1/2} = O(1)
\end{align*}
The second inequality here is due to bounded propensity scores, the third due to Lemma~\ref{lem:msm-second-order}, the fourth due to i.i.d. data, the fifth is Cauchy-Swarchz, and the final equality is due to bounded outcome moments and Lemma~\ref{lem:msm-second-order}. This implies that
\[
\sup_{(\alpha, \beta) \in B_{0}(\epsilon)}\bigg|\frac{1}{N}\sum^{N}_{i=1}U_{ir\alpha\alpha}(\gamma_{0}, \beta, \alpha_{i})\bigg| = O_p(1).
\]
Note that $(\overline{\beta}, \overline{\alpha}_i) \in \mathcal{B}(\epsilon)$ with probability approaching 1 due to these values between the consistent estimators and the true values of the parameters. This implies that
\[
\bigg|\frac{1}{N}\sum^{N}_{i=1}U_{ir\alpha\alpha}(\gamma_{0}, \overline{\beta}, \overline{\alpha}_{i})\bigg| = O_p(1).
\]
Combining this with the above, we have that (IV) is $\sqrt{N}O_p(T^{-3/4})O_p(1) = o_p(1)$.

For (V), we follow a similar strategy. First note that we have:
\[
\abs*{\text{(V)}} < \sqrt{N}\norm{\widehat{\beta} - \beta_0}\max_i|\widehat{\alpha}_{i} - \alpha_{i}|\Bigg\Vert\frac{1}{N} \sum_{i=1}^N U_{ir\beta\alpha}(\gamma_{0}, \overline{\beta}, \overline{\alpha}_{i})\Bigg\Vert
\]
As above, $\sqrt{N}\norm{\widehat{\beta} - \beta_0}\max_i|\widehat{\alpha}_{i} - \alpha_{i}| = \sqrt{N}O_p(1/\sqrt{NT})O_p(T^{-3/8}) = O_p(T^{-7/8})$. Let $U_{irq\alpha} = \partial U_{ir\alpha}/\partial \beta_q$ be the $q$th entry of $U_{ir\beta\alpha}$.  By a similar argument to $U_{ir\alpha\alpha}$, we have
\[
U_{irq\alpha}(\beta, \alpha) = U_{ir}(\beta, \alpha)\left( \overline{V}_i(\beta, \alpha) \overline{S}_{iq}(\beta, \alpha) + \partial \overline{V}_i(\beta, \alpha)/\partial \beta_q\right) \equiv U_{ir}(\beta, \alpha)\overline{V}_{iq}(\beta, \alpha).
\]
By Lemma~\ref{lem:msm-second-order} and the argument for $U_{ir\alpha\alpha}$, we have
\[
\abs*{\frac{1}{N} \sum_{i=1}^N U_{irq\alpha}(\gamma_{0}, \overline{\beta}, \overline{\alpha}_{i})} = O_p(1),
\]
for all $q$ which in turn implies,
\[
\Bigg\Vert\frac{1}{N} \sum_{i=1}^N U_{ir\beta\alpha}(\gamma_{0}, \overline{\beta}, \overline{\alpha}_{i})\Bigg\Vert = O_p(1)
\]
Thus, we have that (V) is $O_p(T^{-7/8})O_p(1) = o_p(1)$. The proof for (III) being $o_p(1)$ follows similarly.


\paragraph{Combining all results}

Combining all results, we have that
\begin{align*}
\sqrt{N}(\widehat{\gamma} - \gamma_{0})
&= G^{-1} \frac{1}{\sqrt{N}}\sum^{N}_{i=1}U_{i} + o_{p}(1)
\end{align*}

\end{proof}

\clearpage
\section{Supporting Lemmas}
\label{appendix:lemmas}



\subsection{Lemmas}

The following lemma is a restatement of results in \cite{FerWei16} with an additional result on the uniform rate of converge for the propensity score model.
\begin{lemma}\label{lem:alpha-expansion}
  Under Assumption~\ref{a:panel-data}, the following hold:
  \begin{enumerate}
    \item[(i)] $\Vert\widehat{\beta} - \beta_0\Vert = O_p(1/\sqrt{NT})$
  \item[(ii)] Letting $\psi_{it} = \E_T\{\E_\alpha[V_{it\alpha}]\}^{-1}V_{it}$, we have:

    \[
      \widehat{\alpha}_i  = \alpha_{i0} + \frac{1}{T}\sum_{t=1}^T \psi_{it} + R_i,
    \]
    where $R_i = O_p(1/T)$.
    \item[(iii)] $\max_i |\widehat{\alpha}_i - \alpha_{i0}| = \max_{i}|T^{-1}\sum_{t=1}^T \psi_{it}| = O_p(T^{-3/8})$ and $\max_i|R_i| = o_p(T^{-1/2})$.
  \end{enumerate}
\end{lemma}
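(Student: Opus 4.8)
The plan is to treat (i) and (ii) as essentially a transcription of the profile-likelihood analysis of \citet{FerWei16}, and to reserve the real work for the uniform rates in (iii), which is the genuinely new content. Throughout, I would write $V_{it} = \partial_\alpha \ell_{it}(\beta_0,\alpha_{i0})$ for the $\alpha$-score and $V_{it\alpha} = \partial_\alpha^2 \ell_{it}$ for its derivative; note $\E[V_{it}\mid\alpha_i]=0$ because $V_{it}$ is a score, while $-\E[V_{it\alpha}\mid\alpha_i]\in[b_{\min},b_{\max}]$ by Assumption~\ref{a:concavity}, so the normalizer defining $\psi_{it}$ is bounded away from $0$ and $\infty$. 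Since $V_{it}$ and $V_{it\alpha}$ are smooth functions of $Z_{it}$ with derivatives dominated by $M(Z_{it})$ (Assumption~\ref{a:bounded-derivatives}) and the propensity scores are bounded away from $0$ and $1$ (Assumption~\ref{a:bounded-propensity-scores}), all of these quantities inherit uniformly bounded $(8+\nu)$th moments.

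For (i), I would invoke the large-$N$, large-$T$ profile-likelihood theory: concentrating out the $\alpha_i$ and differentiating in $\beta$ yields a profile score that, by the concavity in Assumption~\ref{a:concavity} and the mixing/moment conditions of Assumption~\ref{a:sampling}, obeys a central limit theorem at rate $\sqrt{NT}$ with an incidental-parameters bias of order $1/T$. Under Assumption~\ref{a:rates} ($N/T\to\rho$) we have $1/T\asymp 1/\sqrt{NT}$, so the bias only shifts the center of the limiting law and does not change the rate; hence $\norm{\widehat\beta-\beta_0}=O_p(1/\sqrt{NT})$. For (ii), I would fix $i$ and expand the first-order condition $T^{-1}\sum_t \partial_\alpha\ell_{it}(\widehat\beta,\widehat\alpha_i)=0$ about $\alpha_{i0}$. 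A mean-value expansion gives
\[
\widehat\alpha_i-\alpha_{i0} = \frac{T^{-1}\sum_t V_{it}}{-T^{-1}\sum_t V_{it\alpha}(\bar\alpha_i)} + O_p(\norm{\widehat\beta-\beta_0}),
\]
and replacing the random denominator by its probability limit $-\E_T\{\E_\alpha[V_{it\alpha}]\}$ (legitimate by Assumption~\ref{a:concavity}) produces the leading term $T^{-1}\sum_t\psi_{it}$. The remainder $R_i$ collects (a) the denominator fluctuation times the score average, (b) the quadratic Taylor term, and (c) the $O_p(1/\sqrt{NT})=O_p(1/T)$ contribution of $\widehat\beta-\beta_0$; each is $O_p(1/T)$ for fixed $i$.

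For (iii), the key device is a moment inequality together with a union bound over the $N$ units. Writing $\bar\psi_i = T^{-1}\sum_t\psi_{it}$, this is a normalized partial sum of a conditionally mean-zero, $\alpha$-mixing sequence with $(8+\nu)$ finite moments, so a Rosenthal-type inequality for mixing arrays (valid because $\mu>4(8+\nu)/\nu$ forces the relevant covariance series to converge) yields $\E[|\bar\psi_i|^{8+\nu}]\le C\,T^{-(8+\nu)/2}$ uniformly in $i$. Markov's inequality and the union bound then give, for any $M>0$,
\[
\hP\big(T^{3/8}\max_i|\bar\psi_i|>M\big) \le \sum_{i=1}^N \frac{\E[|\bar\psi_i|^{8+\nu}]}{M^{8+\nu}\,T^{-3(8+\nu)/8}} \le \frac{C'}{M^{8+\nu}}\,N\,T^{-(8+\nu)/8},
\]
and since $N\asymp T$ this bound is $O(T^{-\nu/8})\to 0$; hence $\max_i|\bar\psi_i|=O_p(T^{-3/8})$, the first claim. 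The exponent $3/8$ is exactly what the $(8+\nu)$ moments buy: the union over $N\asymp T$ units costs a factor $T$, and $T\cdot T^{-(8+\nu)/8}=T^{-\nu/8}$ still vanishes. Applying the identical argument to the score average and to the Hessian fluctuation shows each has a max over $i$ of order $O_p(T^{-3/8})$; multiplying the two (they are the two factors in the dominant part of $R_i$) gives $\max_i|R_i|=O_p(T^{-3/4})=o_p(T^{-1/2})$, and the $\widehat\beta$ contribution is uniformly $O_p(1/T)=o_p(T^{-1/2})$, completing (iii).

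The main obstacle is the bookkeeping in (iii): obtaining a moment inequality that delivers the sharp $T^{-(8+\nu)/2}$ decay for a dependent (mixing) array rather than an i.i.d.\ one, and then calibrating the threshold so the union bound closes under $N\asymp T$ with only $8+\nu$ moments. This is precisely where the otherwise opaque constants of Assumption~\ref{a:panel-data} enter: the mixing rate $\mu>4(8+\nu)/\nu$ is tuned to make the covariance sums in the moment inequality summable, while $8+\nu>8$ is exactly the margin needed for $N\,T^{-(8+\nu)/8}\to 0$. Parts (i)--(ii) are standard once one is willing to cite \citet{FerWei16}; the care lies in verifying that their hypotheses hold in this propensity-score setting and that the remainders are controlled \emph{uniformly}, not merely pointwise, in $i$.
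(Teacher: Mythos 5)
Your proposal is correct in substance and, for parts (i) and (iii), follows essentially the same route as the paper: part (i) is delegated to \citet{FerWei16}, and part (iii) rests on exactly the same device the paper uses --- bound a high moment of each normalized partial sum via a Cox--Kim/Rosenthal-type inequality for mixing arrays (the paper's Lemma on mixing moment bounds), pass to the max over $i$ by summing $N$ such moments, and observe that $N\asymp T$ costs one factor of $T$ that the $T^{-p/2}$ decay absorbs, yielding the rate $T^{-3/8}$. Your treatment of $\max_i|R_i|$ as a product of two $O_p(T^{-3/8})$ maxima giving $O_p(T^{-3/4})=o_p(T^{-1/2})$, plus a uniformly $O_p(1/T)$ contribution from $\widehat\beta$, is precisely the paper's bookkeeping. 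Where you genuinely diverge is part (ii): you expand the first-order condition $T^{-1}\sum_t\partial_\alpha\ell_{it}(\widehat\beta,\widehat\alpha_i)=0$ directly by the mean-value theorem, whereas the paper works through the Legendre transform $\ell_i^*(\beta,V)$ of the within-unit objective, reads off $\widehat\alpha_i(\beta)=-\partial_V\ell_i^*(\beta,0)$, and Taylor-expands the dual function. The two are equivalent for the first-order term and the $O_p(1/T)$ remainder; the dual formulation buys a systematic catalogue of the higher-order derivatives ($\partial_{VVVV}\ell_i^*$, $\partial_{V\beta\beta'}\ell_i^*$, etc.) that the paper needs to control the remainder \emph{uniformly} in $i$ (its Lemma on the remainder terms), while your direct expansion is more elementary but would require you to assemble those uniform bounds on the Hessian and third-derivative fluctuations by hand --- which your union-bound machinery in (iii) does in fact supply.

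One small calibration point in (iii): you propose bounding $\E[|\bar\psi_i|^{8+\nu}]\le CT^{-(8+\nu)/2}$, but the moment inequality available under Assumption~\ref{a:panel-data} (stated for even powers $2r$ with $\delta>2r$ moments of the summands) only licenses the $2r=8$ case when the summands have exactly $8+\nu$ moments; bounding the $(8+\nu)$th moment of the partial sum would need strictly more. The paper accordingly works with the eighth moment, $\E[(\max_i|V_i|)^8]\le T^{-4}NB=O(T^{-3})$, which still delivers $\max_i|V_i|=O_p(T^{-3/8})$ (the tail bound is then merely bounded in $M^{-8}$ rather than vanishing, which is all $O_p$ requires). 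Your exponent arithmetic and final rate are unaffected; only the order of the moment you take should be lowered to $8$.
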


The following can be found as Proposition 2.5 of~\citet{FanYao05}.

\begin{lemma}\label{lem:mixing-covariance}
  Let $\{\xi_t\}$ be an $\alpha$-mixing process with mixing coefficient $a(m)$. Let $\E|\xi_t|^p < \infty$ and $\E|\xi_{t+m}|^q < \infty$ for some $p, q \geq 1$ and $1/p + 1/q < 1$. Then,
  \[
    |\textrm{Cov}(\xi_t, \xi_{t+m})| \leq 8 a(m)^{1/r}\left[ \E|\xi_t|^p \right]^{1/p}\left[ \E|\xi_{t+m}|^q \right]^{1/q},
  \]
where $r = (1 - 1/p - 1/q)^{-1}$.
\end{lemma}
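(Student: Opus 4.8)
The plan is to derive this as a special case of Davydov's covariance inequality, following the classical two–stage strategy: bound the covariance first for \emph{bounded} random variables straight from the definition of the mixing coefficient, and then remove boundedness by a double truncation combined with Markov's and H\"older's inequalities, optimizing the truncation levels to produce the exponent $1/r$. The first move is to reduce the statement, which is phrased for the pair $(\xi_t,\xi_{t+m})$, to an inequality between two $\sigma$-fields. Taking $\mathcal{A}=\sigma(\xi_s : s\le t)$ and $\mathcal{B}=\sigma(\xi_s : s\ge t+m)$, the variable $\xi_t$ is $\mathcal{A}$-measurable, $\xi_{t+m}$ is $\mathcal{B}$-measurable, and $\sup_{A\in\mathcal{A},\,B\in\mathcal{B}}\abs{\hP(A\cap B)-\hP(A)\hP(B)}\le a(m)$. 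Hence it suffices to bound $\abs{\textrm{Cov}(X,Y)}$ for any $\mathcal{A}$-measurable $X$ and $\mathcal{B}$-measurable $Y$ in terms of $a(m)$, $\norm{X}_p$ and $\norm{Y}_q$.

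First I would treat the bounded case, establishing that for $\mathcal{A}$-measurable $X$ and $\mathcal{B}$-measurable $Y$ with finite sup-norms,
\[
\abs{\textrm{Cov}(X,Y)} \le 4\,a(m)\,\norm{X}_{\infty}\norm{Y}_{\infty}.
\]
The engine is the identity $\textrm{Cov}(X,Y)=\E\!\left[X\big(\E[Y\mid\mathcal{A}]-\E[Y]\big)\right]$, which holds because $X$ is $\mathcal{A}$-measurable, so that $\abs{\textrm{Cov}(X,Y)}\le\norm{X}_{\infty}\,\norm{\E[Y\mid\mathcal{A}]-\E[Y]}_{1}$. What remains is the conditional-expectation bound $\norm{\E[Y\mid\mathcal{A}]-\E[Y]}_{1}\le 4\,a(m)\,\norm{Y}_{\infty}$. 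For an indicator $Y=\mathbf{1}_{B}$ this follows by taking the $\mathcal{A}$-set $A=\{\hP(B\mid\mathcal{A})>\hP(B)\}$ and noting $\E[(\hP(B\mid\mathcal{A})-\hP(B))\mathbf{1}_{A}]=\hP(A\cap B)-\hP(A)\hP(B)\le a(m)$, with the negative part symmetric, giving the factor $2a(m)$; passing to simple, then bounded, then signed $Y$ by linearity and dominated approximation costs at most a further factor of $2$.

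Next I would remove the boundedness by truncation. Fix thresholds $c,d>0$ and split $X=X_{c}+\widetilde{X}$ with $X_{c}=(X\wedge c)\vee(-c)$ and $\widetilde{X}=X-X_{c}$, and likewise $Y=Y_{d}+\widetilde{Y}$. Bilinearity of the covariance produces four terms. The bounded--bounded term is handled by the previous step, $\abs{\textrm{Cov}(X_{c},Y_{d})}\le 4\,a(m)\,cd$. Each remaining term carries a truncation tail and is controlled by $\abs{\textrm{Cov}(U,V)}\le\E\abs{UV}+\E\abs{U}\,\E\abs{V}$ together with the Markov-type estimates $\E\abs{\widetilde{X}}\le c^{1-p}\norm{X}_{p}^{p}$ and $\E\abs{\widetilde{Y}}\le d^{1-q}\norm{Y}_{q}^{q}$ (from $\abs{X}\mathbf{1}(\abs{X}>c)\le c^{1-p}\abs{X}^{p}$) and, for the tail--tail term, a three–exponent H\"older bound $\E\abs{\widetilde{X}\widetilde{Y}}\le\norm{X}_{p}\norm{Y}_{q}\,\hP(\abs{X}>c)^{1/r}$ with $\hP(\abs{X}>c)\le c^{-p}\norm{X}_{p}^{p}$. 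After normalizing $\norm{X}_{p}=\norm{Y}_{q}=1$, choosing $c=a(m)^{-1/p}$ and $d=a(m)^{-1/q}$ equalizes the orders: every term becomes a multiple of $a(m)^{1-1/p-1/q}=a(m)^{1/r}$, which is the claimed rate.

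The main obstacle is purely quantitative: the naive four–way split yields $\abs{\textrm{Cov}(\xi_t,\xi_{t+m})}\le C\,a(m)^{1/r}\norm{\xi_t}_{p}\norm{\xi_{t+m}}_{q}$ with an absolute constant $C$ somewhat larger than $8$, so obtaining the sharp prefactor $8$ requires tightening the bookkeeping — in particular bounding the tail--tail contribution by the single three–exponent H\"older step above rather than splitting it further, and absorbing the lower-order piece $\E\abs{\widetilde{X}}\,\E\abs{\widetilde{Y}}=O(a(m)^{2-1/p-1/q})$ using $a(m)\le 1$. A secondary technical point is the measure-theoretic passage in the bounded case from indicators to general signed $Y$, which must be carried out so as not to inflate the constant beyond $4$. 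With these in hand, applying the abstract inequality to $X=\xi_t$ and $Y=\xi_{t+m}$, with the $\sigma$-field mixing coefficient between $\mathcal{A}$ and $\mathcal{B}$ bounded by $a(m)$, delivers the stated bound.
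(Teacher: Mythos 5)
You should first note that the paper contains no proof of this lemma at all: it is quoted verbatim as Proposition 2.5 of Fan and Yao (2005), i.e., Davydov's covariance inequality, so there is no internal argument to match. Your proposal reconstructs the standard textbook proof of the cited result, and its architecture is sound: the reduction to a two-$\sigma$-field inequality with $\mathcal{A}=\sigma(\xi_s:s\le t)$, $\mathcal{B}=\sigma(\xi_s:s\ge t+m)$ is consistent with how the paper defines $a_i(m)$ in Assumption 2(ii); the bounded case with constant $4a(m)$ is correct (your ``indicator, then simple, then bounded'' passage is best made rigorous via the layer-cake identity $\E[Y\mid\mathcal{A}]-\E Y=\int_0^1\bigl(\hP(Y>u\mid\mathcal{A})-\hP(Y>u)\bigr)\,du$ for $Y\in[0,1]$, which preserves the factor $2a(m)$ and costs only a factor $2$ for signed $Y$, exactly as you claim); and the truncation levels $c=a(m)^{-1/p}$, $d=a(m)^{-1/q}$ with the three-exponent H\"older step correctly produce the rate $a(m)^{1/r}$.

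The one genuine gap is the prefactor $8$, which you flag but whose proposed repair does not close. With $\norm{\xi_t}_p=\norm{\xi_{t+m}}_q=1$, your simultaneous four-way split gives, term by term, $4a^{1/r}+2a^{1/r}+2a^{1/r}+a^{1/r}+a^{1+1/r}$, i.e., at least $9a^{1/r}$ even after absorbing the product tail term via $a(m)\le 1/4$; and rescaling the truncation levels cannot uniformly reach $8$, since shrinking $d$ below $a^{-1/q}$ inflates $d^{1-q}$ without bound as $q\to\infty$. The known route to the constant $8$ is \emph{sequential} rather than simultaneous truncation (Hall and Heyde 1980, Corollary A.2): from the bounded case (constant $4$), truncate only $X$ to get $\abs{\mathrm{Cov}(X,Y)}\le 6\,a^{1-1/p}\norm{X}_p\norm{Y}_\infty$ for bounded $Y$ (terms $4+2$); then truncate $Y$ at $d=a^{-1/q}\norm{Y}_q$ and bound the tail cross term by the \emph{two}-exponent H\"older inequality $\E\abs{X\widetilde{Y}}\le\norm{X}_p\,\Vert\widetilde{Y}\Vert_{p'}$ with $p'$ conjugate to $p$ — legitimate precisely because $1/p+1/q<1$ forces $p'<q$, so $\Vert\widetilde{Y}\Vert_{p'}^{p'}\le d^{p'-q}\norm{Y}_q^q$ — which yields $6+1+1=8$. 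Finally, a mitigating remark: the paper invokes this lemma only to obtain $\abs{\mathrm{Cov}(V_{it},U_{ir,\alpha}(\underline{d}_k,D_{it})\mid\alpha_i)}\le C(T-k-t)^{-4}$ with a generic constant, so any finite universal constant suffices for the paper's purposes; your argument already delivers that, and only the literal ``$8$'' in the statement requires the sequential bookkeeping.
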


The following lemma comes from Theorem 1 of~\citet{CoxKim95}.

\begin{lemma}\label{lem:mixing-bounds}
  Let $\{\xi_t\}$ be an $\alpha$-mixing process with mixing coefficient $a(m)$ and $\E[\xi_t] = 0$. Let $r\ge 1$ be an integer, and let $\delta > 2r$, $\mu > r / (1 - 2r / \delta)$, $c > 0$, and $C > 0$. Assume that $\sup_t \E[|\xi_t|^{\delta}] \leq C$ and that $a(m) \le cm^{-\mu}$ for all $m \in \{1,2,3,\ldots\}$. Then there exists a constant $B > 0$ depending on $r, \delta, \mu, c$ and $C$, but not depending on $T$ or any other distributional characteristics of $\xi_t$ such that for any $T > 0$,
  \[
    \E\left[ \left( \frac{1}{\sqrt{T}} \sum_{t=1}^T \xi_t \right)^{2r} \right] \leq B.
  \]
\end{lemma}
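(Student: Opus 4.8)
The plan is to treat this as a Rosenthal-type moment inequality and prove it by directly expanding the $2r$-th moment, controlling each resulting mixed moment through the $\alpha$-mixing covariance inequality already available as Lemma~\ref{lem:mixing-covariance}. Writing $S_T = \sum_{t=1}^T \xi_t$, I would first expand
\[
\E[S_T^{2r}] = \sum_{t_1,\ldots,t_{2r}=1}^T \E[\xi_{t_1}\cdots \xi_{t_{2r}}],
\]
and, by symmetry, reduce to ordered index tuples $t_1 \le \cdots \le t_{2r}$ up to a combinatorial factor. Since the normalization contributes $T^{r} = (T^{1/2})^{2r}$, the goal is to show that this sum is $O(T^{r})$, so that dividing by $T^{r}$ gives a bound uniform in $T$.

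The heart of the argument is a recursive split of each mixed moment at its largest gap. For an ordered tuple let $m = \max_j(t_{j+1}-t_j)$ be the largest spacing, and cut the product there into a left block $L$ and a right block $R$. Because products of the $\xi$'s over a time interval are measurable with respect to the corresponding past/future sigma-fields, the pair $(\prod_L \xi, \prod_R \xi)$ inherits the mixing coefficient $a(m)$, so Lemma~\ref{lem:mixing-covariance} yields
\[
\abs*{\E\Big[\prod_L \xi\,\prod_R \xi\Big]} \le \abs*{\E\Big[\prod_L \xi\Big]\,\E\Big[\prod_R \xi\Big]} + 8\,a(m)^{1/r'}\Big\|\prod_L\xi\Big\|_p \Big\|\prod_R\xi\Big\|_q,
\]
where the Hölder exponents $p,q$ are tuned so that $1-1/p-1/q = 1/r'$ and, via generalized Hölder, the block norms are controlled by $\sup_t \E[|\xi_t|^{\delta}]\le C$ (this requires $p\cdot|L|\le\delta$ and $q\cdot|R|\le\delta$). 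The mean-zero hypothesis is what makes this productive: whenever a split isolates a singleton block, the factored term $\E[\prod_L \xi]\,\E[\prod_R\xi]$ vanishes, leaving only the decaying mixing term. Iterating the split on the two blocks produces, for each tuple, a bound that is a product of mixing factors $a(\cdot)^{1/r'}$ indexed by the internal gaps times a uniformly bounded moment factor.

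Collecting the pieces, the mean-zero cancellations force every surviving configuration to group its $2r$ indices into clusters of size at least two, hence into at most $r$ clusters. Choosing one anchor index per cluster gives $O(T^{r})$ configurations, while the within-cluster gap sums are governed by series of the form $\sum_{m\ge 1} m^{s}\,a(m)^{1/r'} \le c^{1/r'}\sum_{m\ge 1} m^{s-\mu/r'}$, with the recursion tuned so that the effective exponent is $1/r' = 1 - 2r/\delta$ (positive precisely because $\delta > 2r$). These series converge exactly when $\mu(1-2r/\delta) > r$, i.e.\ $\mu > r/(1 - 2r/\delta)$, which is the stated hypothesis. \textbf{The main obstacle} is precisely this bookkeeping: organizing the recursive splitting and the mean-zero cancellations so the mixing factors combine into a convergent multiple series, and checking that the Hölder exponents remain admissible (so the $\delta$-moments suffice) at every level. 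The base case $r=1$ is illustrative and reassuring, since there $\E[S_T^{2}]/T \le T^{-1}\sum_t\sum_m \abs*{\mathrm{Cov}(\xi_t,\xi_{t+m})} \le 8\,C^{2/\delta}\sum_m a(m)^{1-2/\delta}$, which is finite exactly when $\mu(1-2/\delta)>1$, matching $\mu > 1/(1-2/\delta)$; the general case extends this same balance to all $r$.
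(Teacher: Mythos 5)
The paper never proves this lemma: it is imported wholesale, with the sentence ``The following lemma comes from Theorem 1 of \citet{CoxKim95}.'' So your blind attempt should be measured against that classical argument, and in outline you have reconstructed exactly the standard proof behind such Yokoyama/Cox--Kim moment bounds: expand $\E[S_T^{2r}]$ over index tuples, order the indices, split each mixed moment at the largest gap, apply the Davydov-type covariance inequality with H\"older exponents $p=\delta/\lvert L\rvert$, $q=\delta/\lvert R\rvert$ (admissible since $2r/\delta<1$, which is where $\delta>2r$ enters), use the mean-zero hypothesis to annihilate factored singleton blocks, and count the surviving cluster configurations, of which there are $O(T^r)$ because every cluster has size at least two. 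The paper's citation buys brevity and certified constants; your direct argument makes visible where each hypothesis does its work, which is genuinely useful since the main text invokes this lemma repeatedly.

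Two caveats, both repairable. First, you apply Lemma~\ref{lem:mixing-covariance} to the block products $\prod_L \xi$ and $\prod_R \xi$, but as restated in the paper that lemma covers only the pair $(\xi_t,\xi_{t+m})$; the recursion needs the $\sigma$-field version (any $X$ measurable with respect to the past, $Y$ with respect to the future, separated by a gap $m$). That stronger form is in fact what Proposition 2.5 of \citet{FanYao05} asserts, so you should invoke the source rather than the paper's specialization. Second, your claim that the within-cluster gap series ``converge exactly when $\mu(1-2r/\delta)>r$'' is not right as stated: for the worst configuration, a single cluster of size $2r$, the relevant series is $\sum_{m\ge 1} m^{2r-2}\,a(m)^{1-2r/\delta}$, whose convergence would require the much stronger $\mu(1-2r/\delta)>2r-1$. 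What the stated hypothesis actually delivers is growth control of the truncated sums relative to the $T^r$ budget: a size-$k$ cluster, with one anchor (a factor of $T$) and $k-1$ gaps each bounded by the largest gap, contributes at most of order $T\sum_{m\le T} m^{k-2-\mu(1-2r/\delta)} = O\bigl(T^{\,k-\mu(1-2r/\delta)}\bigr)$, and requiring this to be $O(T^{k/2})$ across all cluster sizes is binding at $k=2r$, where it reads precisely $\mu(1-2r/\delta)\ge r$. Substituting this truncated-sum accounting for ``convergence'' closes your recursion and reproduces the cited theorem; as written, the convergence claim is the one step that would fail.
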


\begin{lemma}\label{lem:alpha-remainder}
Let $\ell_i^*(\beta, V)$ be the Legendre transformation of the objective function $\ell_{i}(\beta, \alpha) = T^{-1}\sum_{t=1}^T \ell_{it}(\beta, \alpha)$ such that 
$\ell_{i}^*(\beta, V) = \max_{\alpha \in \mathcal{B}_{\alpha}(\epsilon_{\alpha})} [\ell_{i}(\beta, \alpha) - \alpha V]$
and $A_i(\beta, V) = \argmax_{\alpha \in \mathcal{B}_{\alpha}(\epsilon_{\alpha})} [\ell_{i}(\beta, \alpha) - \alpha V]$
where $\beta \in \mathcal{B}_{\beta}(\epsilon_{\beta})$ and $V$ denotes the dual parameter to $\alpha$. Suppose Assumption~\ref{a:panel-data} holds. Then, for some $\nu > 0$
  \begin{enumerate}
  \item[(i)]
      \[
    \begin{aligned}
      \sup_{(\beta,\alpha_i) \in \mathcal{B}_{0}(\epsilon)} &|\partial_{VVVV} \ell^*_i(\beta, \alpha_i)| = O_p(1)\\
      \sup_{(\beta,\alpha_i) \in \mathcal{B}_{0}(\epsilon)} &\Vert \partial_{V\beta\beta'} \ell_i^*(\beta, \alpha_i)\Vert = O_p(1)\\
      \sup_{(\beta,\alpha_i) \in \mathcal{B}_{0}(\epsilon)} & \Vert \partial_{VV\beta} \ell_i^*(\beta, \alpha_i)\Vert = O_p(1)\\
    \end{aligned}
  \]
  \item[(ii)]
    \[
      \begin{aligned}
      \sup_{(\beta,\alpha_i) \in \mathcal{B}_{0}(\epsilon)} &\max_i |\partial_{VVVV} \ell^*_i(\beta, \alpha_i)| = O_p(T^{1/(8+\nu)})\\
      \sup_{(\beta,\alpha_i) \in \mathcal{B}_{0}(\epsilon)} &\max_i \Vert \partial_{V\beta\beta'} \ell_i^*(\beta, \alpha_i)\Vert = O_p(T^{2/(8+\nu)})\\
      \sup_{(\beta,\alpha_i) \in \mathcal{B}_{0}(\epsilon)} &\max_i \Vert \partial_{VV\beta} \ell_i^*(\beta, \alpha_i)\Vert = O_p(T^{2/(8+\nu)})\\
    \end{aligned}
  \]
  \end{enumerate}
\end{lemma}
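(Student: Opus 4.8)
The plan is to obtain explicit expressions for the $V$- and $\beta$-derivatives of the conjugate $\ell_i^*$ by repeatedly differentiating the identity that defines the maximizer $A_i(\beta,V)$, and then to bound each resulting expression termwise using the smoothness and moment conditions of Assumption~\ref{a:panel-data}. By the envelope theorem, $\partial_V \ell_i^*(\beta,V) = -A_i(\beta,V)$, while the first-order condition reads $\partial_\alpha \ell_i(\beta,A_i(\beta,V)) = V$. Differentiating the latter yields $\partial_V A_i = 1/\partial_{\alpha\alpha}\ell_i$ and $\partial_\beta A_i = -\partial_{\alpha\beta}\ell_i/\partial_{\alpha\alpha}\ell_i$, so that every higher-order derivative of $\ell_i^*$ is a rational function whose numerator is a polynomial in the $\alpha$- and $\beta$-partials of $\ell_i$ (of order at most four) and whose denominator is a power of $\partial_{\alpha\alpha}\ell_i$. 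For instance,
\[
\partial_{VV}\ell_i^* = -\frac{1}{\partial_{\alpha\alpha}\ell_i}, \qquad \partial_{VVVV}\ell_i^* = \frac{\partial_{\alpha\alpha\alpha\alpha}\ell_i}{(\partial_{\alpha\alpha}\ell_i)^4} - \frac{3(\partial_{\alpha\alpha\alpha}\ell_i)^2}{(\partial_{\alpha\alpha}\ell_i)^5},
\]
with analogous (longer) expressions for $\partial_{VV\beta}\ell_i^*$ and $\partial_{V\beta\beta'}\ell_i^*$ obtained by also differentiating in $\beta$ and substituting $\partial_\beta A_i$. Since the supremum is over $(\beta,\alpha)\in\mathcal{B}_0(\epsilon)$ and the evaluation point corresponds to $V=\partial_\alpha\ell_i(\beta,\alpha)$ (so that $A_i(\beta,V)=\alpha$), each factor reduces to a within-unit time average $\partial^{(k)}\ell_i(\beta,\alpha)=T^{-1}\sum_{t=1}^T\partial^{(k)}\ell_{it}(\beta,\alpha)$, which is what I would bound.

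Two ingredients then remain: a uniform lower bound on the denominators and bounds on the numerator factors. For the denominator I would use the concavity condition in Assumption~\ref{a:concavity}: the conditional curvature $-\E[\partial_{\alpha\alpha}\ell_{it}\mid\alpha_i]$ is bounded below by $b_{\min}>0$ uniformly, and the average $-\partial_{\alpha\alpha}\ell_i(\beta,\alpha)$ concentrates around it. Controlling the maximal deviation $\sup_{\mathcal{B}_0(\epsilon)}\max_i|\partial_{\alpha\alpha}\ell_i-\E[\partial_{\alpha\alpha}\ell_{it}\mid\alpha_i]|$ by combining the mixing moment inequality of Lemma~\ref{lem:mixing-bounds} (which gives $\E[|{\cdot}|^{2r}]=O(T^{-r})$ unit by unit) with a union bound over the units and a Lipschitz/covering argument over $\mathcal{B}_0(\epsilon)$ (licensed by the $M(Z_{it})$ bound on the third derivative) shows this deviation is $o_p(1)$. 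Hence $\sup_{\mathcal{B}_0(\epsilon)}\max_i 1/|\partial_{\alpha\alpha}\ell_i|=O_p(1)$, so every denominator power is uniformly $O_p(1)$ and drops out of the final rate.

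For the numerators, Assumption~\ref{a:bounded-derivatives} dominates each partial of $\ell_{it}$ by $M(Z_{it})$ with $\max_{i,t}\E[M(Z_{it})^{8+\nu}]$ uniformly bounded, so by Jensen each averaged factor is dominated by $\overline{M}_i=T^{-1}\sum_t M(Z_{it})$, which inherits a uniformly bounded $(8+\nu)$-th moment. For part (i), the mixing law of large numbers from Assumption~\ref{a:sampling} makes each numerator factor $O_p(1)$, and together with the denominator bound this gives the stated $O_p(1)$ results. For part (ii), I would separate the two kinds of differentiation: an $\alpha$-derivative leaves the index structure intact, so its average concentrates to a uniformly bounded conditional mean and contributes $O_p(1)$ to the maximum, whereas a $\beta$-derivative brings in the covariate vector $V_{it}=b(\overline{X}_{it},\overline{D}_{i,t-1})$, whose tails are controlled only through the $(8+\nu)$-th moment of $M(Z_{it})$, so each such averaged factor contributes $\max_i\overline{M}_i=O_p(N^{1/(8+\nu)})=O_p(T^{1/(8+\nu)})$ via Markov plus a union bound (using $N/T\to\rho$ from Assumption~\ref{a:rates}). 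The three rates then follow by tallying, in the dominant term of each conjugate-derivative expression, the number of $\beta$-carrying factors at the maximal scale: the mixed derivatives $\partial_{VV\beta}\ell_i^*$ and $\partial_{V\beta\beta'}\ell_i^*$ are bounded at the two-factor scale $T^{2/(8+\nu)}$, while the pure $\partial_{VVVV}\ell_i^*$ is bounded at the single-factor scale $T^{1/(8+\nu)}$; these suffice for the downstream $\widehat\alpha_i$-expansion.

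The main obstacle is precisely this bookkeeping in part (ii), together with the uniform-in-$i$ denominator control. On one side, showing that $\partial_{\alpha\alpha}\ell_i$ stays bounded away from zero simultaneously across all $N$ units and over $\mathcal{B}_0(\epsilon)$ requires carefully marrying the concavity bound with the maximal inequality, and the exponent budget of Lemma~\ref{lem:mixing-bounds} (governed by $\mu$ and the $(8+\nu)$ moments) must be verified adequate. On the other, landing on the stated exponents hinges on correctly distinguishing the factors that concentrate to uniformly bounded deterministic limits from those that fluctuate at the maximal scale; a naive product count over all factors would overstate the rate, so the argument must exploit that $\alpha$-differentiation preserves concentration while $\beta$-differentiation is the sole source of the $T^{1/(8+\nu)}$ inflation.
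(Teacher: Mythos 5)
Your overall strategy is the one the paper uses: write each conjugate derivative as a rational function of the time-averaged partials of $\ell_i$ via implicit differentiation of the first-order condition, dominate every numerator factor by $\overline{M}_i = T^{-1}\sum_t M(Z_{it})$ using Assumption~\ref{a:bounded-derivatives} and Jensen, and convert the uniformly bounded $(8+\nu)$-th moment into $O_p(1)$ for a single unit and $O_p(N^{1/(8+\nu)}) = O_p(T^{1/(8+\nu)})$ for the max over units via Markov and a union bound with $N/T\to\rho$. Your treatment of the denominator is the one genuine departure: the paper simply invokes $\inf_i \abs*{V_{it\alpha}(\beta,\alpha)} > 0$ (which holds pointwise for the logit/probit likelihood once the propensity scores are bounded away from $0$ and $1$ by Assumption~\ref{a:bounded-propensity-scores}), so no concentration, union bound, or covering argument is needed there; your route through $b_{\min}$ in Assumption~\ref{a:concavity} plus a maximal deviation bound is heavier machinery but would also work and relies only on the stated conditional-moment bound rather than the functional form.

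The gap is in your part (ii) bookkeeping, which is internally inconsistent. You assert that factors produced by $\alpha$-differentiation concentrate and contribute $O_p(1)$ to the maximum while only $\beta$-carrying factors contribute $O_p(T^{1/(8+\nu)})$ — but then your tally assigns $\partial_{VVVV}\ell_i^*$ (which contains no $\beta$-factors at all) the rate $T^{1/(8+\nu)}$ and the mixed derivatives the "two-factor" rate $T^{2/(8+\nu)}$, neither of which follows from your stated rule; applying it literally would give $O_p(1)$ for $\partial_{VVVV}\ell_i^*$ and at most one inflating factor for $\partial_{VV\beta}\ell_i^*$. Moreover, the claim that $\max_i$ of an $\alpha$-derivative time average is $O_p(1)$ is itself an unproved maximal concentration statement (uniformly over $\mathcal{B}_0(\epsilon)$), not something that comes for free. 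The paper's accounting is simpler and actually delivers the stated exponents: every $M$-dominated time-averaged factor, whether it arises from $\alpha$- or $\beta$-differentiation, contributes $O_p(T^{1/(8+\nu)})$ to the max over $i$, so a numerator that is a single such factor gives $T^{1/(8+\nu)}$ and a product of two gives $T^{2/(8+\nu)}$; e.g.\ $\abs*{\partial_{VV\beta_k}\ell_i^*} \le C\left(\abs*{V_{i\beta_k\alpha}} + \abs*{V_{i\alpha\alpha}V_{i\beta_k}}\right) = O_p(T^{1/(8+\nu)}) + O_p(T^{2/(8+\nu)})$. You should replace your $\alpha$-versus-$\beta$ distinction with this uniform factor count.
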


\begin{lemma}\label{lem:msm-second-order}
Let $V_{it}(\beta, \alpha) = \partial \ell_{it}(\beta, \alpha)/ \alpha$ and $S_{itq}(\beta, \alpha) = \partial \ell_{it}(\beta, \alpha) / \partial \beta_q$. Define the following:
  \[
    \begin{aligned}
      \overline{V}_i(\beta,\alpha) &= \sum^{T}_{t=T-k} \bigg[
  (2D_{it}-1) V_{it}(\beta,\alpha) \bigg\{ \frac{\pi_{it}(\beta,\alpha)}{1 - \pi_{it}(\beta,\alpha)} \bigg\}^{1-D_{it}}
  \bigg] \\
  \overline{S}_{iq}(\beta,\alpha) & = \sum^{T}_{t=T-k} \bigg[
  (2D_{it}-1) S_{itq}(\beta,\alpha) \bigg\{ \frac{\pi_{it}(\beta,\alpha)}{1 - \pi_{it}(\beta,\alpha)} \bigg\}^{1-D_{it}}
\bigg] \\
      \overline{V}_{i\alpha}(\beta,\alpha) &= \overline{V}_i(\beta, \alpha)^2 + \partial \overline{V}_i(\beta, \alpha)/\partial \alpha, \\
      \overline{V}_{iq}(\beta,\alpha) &= \overline{V}_i(\beta, \alpha)\overline{S}_{iq}(\beta, \alpha) + \partial \overline{V}_i(\beta, \alpha)/\partial \beta_q, \\
      \overline{S}_{iq\alpha}(\beta,\alpha) &= \overline{V}_i(\beta, \alpha)\overline{S}_{iq}(\beta, \alpha) + \partial \overline{S}_i(\beta, \alpha)/\partial \alpha, \\
      \overline{S}_{iqm}(\beta,\alpha) &= \overline{S}_{iq}(\beta, \alpha)\overline{S}_{im}(\beta, \alpha) + \partial \overline{S}_{iq}(\beta, \alpha)/\partial \beta_m \\
    \end{aligned}
  \]
Suppose Assumption~\ref{a:panel-data} holds. Then each of these is uniformly bounded in absolute value for $(\beta,\alpha) \in \mathcal{B}_{0}(\epsilon)$ by a function $\widetilde{M}_i$ such that $\max_{i,t} \E[\widetilde{M}_i^{4}]$ is almost surely uniformly bounded over $N$.
\end{lemma}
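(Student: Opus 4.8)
The plan is to reduce every quantity appearing in the lemma to a polynomial of bounded degree in the dominating function $M(Z_{it})$ of Assumption~\ref{a:bounded-derivatives}, evaluated over the fixed window $t = T-k,\ldots,T$, and then verify the fourth-moment bound. The starting observations are that (i) the weight factor $w_{it}(\beta,\alpha) \equiv (2D_{it}-1)\{\pi_{it}(\beta,\alpha)/(1-\pi_{it}(\beta,\alpha))\}^{1-D_{it}}$ is bounded in absolute value by a constant uniformly over $\mathcal{B}_0(\epsilon)$, since $|2D_{it}-1| = 1$ and Assumption~\ref{a:bounded-propensity-scores} keeps $\pi_{it}$ and $1-\pi_{it}$ bounded away from zero; and (ii) the scores $V_{it}(\beta,\alpha) = \partial_\alpha \ell_{it}$ and $S_{itq}(\beta,\alpha) = \partial_{\beta_q}\ell_{it}$ are bounded by $M(Z_{it})$ by Assumption~\ref{a:bounded-derivatives}. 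Writing $\pi_{it} = F(v^\top\beta + \alpha)$, the first derivatives $\partial_\alpha \pi_{it}$ and $\partial_{\beta_q}\pi_{it}$ can be expressed through the scores and the bounded propensity scores, so that $\partial_\alpha w_{it}$ and $\partial_{\beta_q} w_{it}$ are in turn bounded by a constant multiple of $M(Z_{it})$ over $\mathcal{B}_0(\epsilon)$.

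With these bounds in hand, each first-order object is immediate: since the sums run over the fixed number $k+1$ of periods, $|\overline{V}_i| \leq C\sum_{t=T-k}^T M(Z_{it})$ and likewise $|\overline{S}_{iq}| \leq C\sum_{t=T-k}^T M(Z_{it})$, i.e.\ each is a degree-one expression in $M$. Differentiating $\overline{V}_i = \sum_t w_{it}V_{it}$ (and analogously $\overline{S}_{iq}$) by the product rule produces terms of the form $(\partial w_{it})V_{it}$ and $w_{it}(\partial V_{it})$; the former is a product of two degree-one factors, and the latter contains a second derivative of $\ell_{it}$, again bounded by $M(Z_{it})$. Hence $\partial_\alpha \overline{V}_i$, $\partial_{\beta_q}\overline{V}_i$, $\partial_\alpha \overline{S}_{iq}$, and $\partial_{\beta_m}\overline{S}_{iq}$ are each bounded by a degree-two polynomial in $\{M(Z_{it})\}$. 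Consequently every composite quantity in the lemma — $\overline{V}_{i\alpha} = \overline{V}_i^2 + \partial_\alpha \overline{V}_i$, $\overline{V}_{iq} = \overline{V}_i\overline{S}_{iq} + \partial_{\beta_q}\overline{V}_i$, $\overline{S}_{iq\alpha}$, and $\overline{S}_{iqm}$ — is a sum of a product of two degree-one objects and a degree-two derivative, so each is dominated in absolute value, uniformly over $\mathcal{B}_0(\epsilon)$, by a common degree-two expression.

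I would then set $\widetilde{M}_i = C'\big(1 + \sum_{t=T-k}^T M(Z_{it})\big)^2$ for a suitable constant $C'$, which dominates all of the objects above, and check the moment requirement. Here $\widetilde{M}_i^4 \leq C''\big(1 + \sum_{t=T-k}^T M(Z_{it})\big)^8$, and by the $c_r$-inequality together with the fixed window size, $\E[\widetilde{M}_i^4] \lesssim 1 + (k+1)^7\sum_{t=T-k}^T \E[M(Z_{it})^8]$. Since $\max_{i,t}\E[M(Z_{it})^{8+\nu}]$ is uniformly bounded by Assumption~\ref{a:bounded-derivatives}, the eighth moments are finite and $\max_i \E[\widetilde{M}_i^4]$ is bounded uniformly over $N$, as required. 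The uniformity over $(\beta,\alpha) \in \mathcal{B}_0(\epsilon)$ is inherited throughout because both $M(Z_{it})$ and the propensity-score bounds in Assumptions~\ref{a:bounded-derivatives} and~\ref{a:bounded-propensity-scores} hold uniformly over the neighborhood, so $\widetilde{M}_i$ does not depend on $(\beta,\alpha)$.

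The argument is essentially bookkeeping, so I do not anticipate a genuine obstacle; the two points demanding care are establishing that the derivatives of the weight factors $w_{it}$ are controlled by $M(Z_{it})$ — which relies essentially on the uniform positivity in Assumption~\ref{a:bounded-propensity-scores} to keep the $(1-\pi_{it})$ denominators away from zero — and tracking the polynomial degree so that the fourth moment of $\widetilde{M}_i$ reduces to the eighth moment of $M(Z_{it})$, which is exactly what the $8+\nu$ moment condition is designed to supply.
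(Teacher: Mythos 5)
Your proposal is correct and follows essentially the same route as the paper's proof: bound the odds-ratio weight factors by a constant via Assumption~\ref{a:bounded-propensity-scores}, bound the scores and their derivatives by $M(Z_{it})$ via Assumption~\ref{a:bounded-derivatives}, exploit the fixed window of $k+1$ periods so that every quantity is dominated by a degree-two polynomial in $M$, and reduce $\E[\widetilde{M}_i^4]$ to the uniformly bounded eighth moment of $M(Z_{it})$. The only cosmetic difference is that the paper uses $M_i = \max_t M(Z_{it})$ where you use $\sum_{t=T-k}^T M(Z_{it})$, which is equivalent up to constants.
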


\subsection{Proof of Lemmas}

\begin{proof}[Proof of Lemma~\ref{lem:alpha-expansion}]
  We take the convention here that any function with the $t$ subscript omitted is the over-time average of that quantity. For example, $V_i(\beta, \alpha) = \E_T[V_{it}(\beta, \alpha)] = T^{-1} \sum_{t=1}^T V_i(\beta, \alpha)$.


Part (i) follows from the results of \citet{FerWei16} without period effects.

To derive an asymptotic expansion of $\widehat{\alpha}_i$ in part (ii), we largely follow the Legendre transformation approach of \cite{FerWei16}. Our discussion follows theirs closely, though in a more specialized setting. We define
\[
\ell_{i}^*(\beta, V) = \max_{\alpha \in \mathcal{B}_{\alpha}(\epsilon_{\alpha})} [\ell_{i}(\beta, \alpha) - \alpha V], \qquad A_i(\beta, V) = \argmax_{\alpha \in \mathcal{B}_{\alpha}(\epsilon_{\alpha})} [\ell_{i}(\beta, \alpha) - \alpha V]
\]
where $\beta \in \mathcal{B}_{\beta}(\epsilon_{\beta})$. The function $\ell_i^*(\beta, V)$ is the Legendre transformation of the objective function $\ell_{i}(\beta, \alpha) = T^{-1}\sum_{t=1}^T \ell_{it}(\beta, \alpha)$. We use $V$ to denote the dual parameter to $\alpha$ and $\ell_i^*(\beta, V)$ as the dual function to $\ell_i(\beta, \alpha)$. The relationship between $\alpha$ and $V$ is one-to-one since the optimal $\alpha = A_i(\beta, V)$ satisfies the first-order condition $V_i(\beta, \alpha) = V$, where $V_i(\beta, \alpha) = T^{-1}\sum_{t=1}^TV_{it}(\beta, \alpha)$. We can write $\ell_i^*(\beta, V) = \ell_i(\beta, A_i(\beta, V)) - A_i(\beta, V)V$ when $A_i(\beta, V)$ solves the FOC, $V_i(\beta, A_i(\beta, V_i)) = V_i$. Taking the derivative of the last identity on both sides of the equality gives:
\[
  \begin{aligned}
    [\partial_{V} A_{i}(\beta, V)][V_{i\alpha}(\beta, A_i(\beta, V))] &= 1 \\
    V_{i\beta}(\beta, A_i(\beta, V)) + [\partial_{\beta}A_i(\beta, V)]V_{i\alpha}(\beta, A_i(\beta, V)) &= 0
  \end{aligned}
\]
so we have
\[
  \begin{aligned}
    \partial_{V} A_{i}(\beta, V) &= \frac{1}{V_{i\alpha}(\beta, A_i(\beta, V))} \\
    \partial_{\beta}A_i(\beta, V) &= -\frac{V_{i\beta}(\beta, A_i(\beta, V))}{V_{i\alpha}(\beta, A_i(\beta, V))}
  \end{aligned}
\]

When $V = 0$, then the optimization in $\ell_i^*$ is just over $\ell_i$, so we have $A_i(\beta, 0) = \widehat{\alpha}_i(\beta)$ and $\ell_i^*(\beta, 0) = \ell_i(\beta, \widehat{\alpha}_i(\beta))$. The latter is the profile likelihood for $\beta$. Note that
\[
  \begin{aligned}
    \partial_{V}\ell^*_i(\beta, A_i(\beta, V)) &= [V_{i}(\beta, A_i(\beta, V))] [\partial_{V} A_i(\beta, V)] - A_i(\beta, V) - [\partial_{V} A_i(\beta, V)]V \\
    &= - A_i(\beta, V)
  \end{aligned}
\]
Thus, we have $\widehat{\alpha}_i(\beta) = -\partial_V \ell^*_i(\beta, 0)$ and $\partial_V \ell^*_i(\beta_0, 0) = -\alpha_{i0}$.

We now expand $\partial_V \ell^*_i(\beta, 0)$ around $(\beta_0, V_i)$ for $\beta \in \mathcal{B}_{\beta_0}(\epsilon_{\beta})$, which gives:
\[
\widehat{\alpha}_i(\beta) = -\partial_V \ell^*_i(\beta, 0) = -\partial_V \ell^*_i - (\partial_{V\beta'}\ell_i^*)(\beta - \beta_0) + (\partial_{VV}\ell_i^*)V_i - \frac{1}{2}(\partial_{VVV} \ell_i^*)V_i^2 + R(\beta),
\]
where
\[
  \begin{aligned}
    R_i(\beta) &=
    \frac{1}{2} (\beta - \beta_0)^{\top}(\partial_{V\beta\beta'} \ell_i^*(\overline{\beta}, V_i))(\beta - \beta_0)
    + (\partial_{VV\beta'} \ell_i^*(\beta_0, \widetilde{V}_i))(\beta - \beta_0)V_i \\
    &\qquad + \frac{1}{6} (\partial_{VVVV} \ell_i^*(\beta_0, \ddot{V}_i))V_i^3
  \end{aligned}
\]
where $\overline{\beta}$ is between $\beta$ and $\beta_0$ and $\widetilde{V}_i$ and $\ddot{V}_i$ are between $V_i$ and 0.

Using the above identities, it is possible to derive the following:
\[
  \begin{aligned}
    \partial_{VV} \ell_i^* &= -1/V_{i\alpha}, \qquad  \partial_{VVV} \ell_i^*= V_{i\alpha\alpha}/V^3_{i\alpha}, \qquad \partial_{\beta}\ell_i^* = S_i \\
    \partial_{V\beta} \ell_i^* &= V_{i\beta} / V_{i\alpha}, \qquad \partial_{\beta\beta'} \ell_i^* = S_{i\beta} + V_{i\beta}V^{\top}_{i\beta}/V_{i\alpha} \\
    \partial_{V\beta\beta'} \ell_i^*
    &=\frac{S_{i\beta\alpha}}{V_{i\alpha}} + \frac{V_{i\beta\alpha}V_{i\beta}^\top}{V_{i\alpha}^2} + \frac{V_{i\beta}V_{i\beta\alpha}^\top}{V_{i\alpha}^2} - \frac{V_{i\beta}V_{i\beta}^\top}{V_{i\alpha}^3}\\
    \partial_{VV\beta} \ell_i^*
    &= \frac{V_{i\beta\alpha}}{V_{i\alpha}^2} - \frac{V_{i\alpha\alpha}V_{i\beta}}{V_{i\alpha}^3} \\
    \partial_{VVVV} \ell_{i}^* &= \frac{V_{i\alpha\alpha\alpha}}{V_{i\alpha}^4} - 3\frac{V_{i\alpha\alpha}}{V_{i\alpha}^5}
  \end{aligned}
\]
Combining these with the above expansion gives us:
\[
\widehat{\alpha}_i(\beta) - \alpha_{i0}
= - \frac{V_i}{V_{i\alpha}} - \frac{V_{i\beta}^{\top}(\beta - \beta_0)}{V_{i\alpha}}  + \frac{V_{i\alpha\alpha}V_i^2}{2V_{i\alpha}^3} + R(\beta)
\]

Let  $\E_{\alpha}$ is the expectation conditional on the fixed effects. We have $\E_{\alpha}[V_i] = 0$ and so by Lemma~\ref{lem:mixing-bounds},  we have $V_i = O_p(T^{-1/2})$, $V_i^2 = O_p(T^{-1})$, and $V_i^3 = O_p(T^{-3/2})$.
For the partial derivative terms, we define $\overline{V}_{it\alpha} = \E_{\alpha}[V_{it\alpha}]$ and $\overline{V}_{i\alpha} = \E_T[\overline{V}_{it\alpha}]$. We also define the mean deviations as $\widetilde{V}_{i\alpha} = V_{i\alpha} - \overline{V}_{i\alpha}$. We define similar quantities for $V_{i\alpha\alpha}$ and $V_{i\beta}$.  As above, we have $\widetilde{V}_{i\alpha} = O_p(T^{-1/2})$. We can derive the following using standard asymptotic results:
\[
  \begin{aligned}
    V_{i\alpha}^{-1} &= \overline{V}_{i\alpha}^{-1} + \overline{V}_{i\alpha}^{-2}\widetilde{V}_{i\alpha} + O_p(T^{-1}) \\
    V_{i\alpha}^{-2} &= \overline{V}_{i\alpha}^{-2} + \overline{V}_{i\alpha}^{-3}\widetilde{V}_{i\alpha} + O_p(T^{-1}) \\
    V_{i\alpha}^{-3} &= \overline{V}_{i\alpha}^{-3} + \overline{V}_{i\alpha}^{-4}\widetilde{V}_{i\alpha} + O_p(T^{-1})
  \end{aligned}
\]

Let $\psi_{i1} = V_{i}/\overline{V}_{i\alpha}$. We have $V_i/V_{i\alpha} = \psi_{i1} + \psi_{i1}(\widetilde{V}_{i\alpha}/\overline{V}_{i\alpha}) + O_p(T^{-3/2})$. For the next two terms, we have
\[
  \begin{aligned}
    \frac{V_{i\beta}^\top(\beta - \beta_0)}{V_{i\alpha}}
    &= \frac{V_{i\beta}^\top(\beta - \beta_0)}{\overline{V}_{i\alpha}} + O_P(T^{-1/2}\Vert \beta - \beta_0\Vert) \\
    \frac{V_{i\alpha\alpha}V_i^2}{2V_{i\alpha}^3} &= \frac{V_{i\alpha\alpha}V_{i}^2}{2\overline{V}^3_{i\alpha}} + O_P(T^{-3/2}) = \frac{V_{i\alpha\alpha}\psi_{i1}^2}{2\overline{V}_{i\alpha}} + O_P(T^{-3/2})
  \end{aligned}
\]

Again based on the bounded derivative and moment conditions and Lemma~\ref{lem:mixing-bounds}, we have  $\widetilde{V}_{i\alpha\alpha} = O_P(T^{-1/2})$ and $\Vert\widetilde{V}_{i\beta}\Vert = O_p(T^{-1/2})$. Combined with the above, we have:

\[
\widehat{\alpha}_i(\beta) - \alpha_{i0}
= -\psi_{i1} - \psi_{i2} + R_{i}(\beta) + O_p\left(T^{-1/2}\Vert \beta - \beta_0\Vert + T^{-3/2}\right),
\]
where
\[
  \psi_{i2} = \frac{1}{\overline{V}_{i\alpha}}\left(\widetilde{V}_{i\alpha}\psi_{i1} + \overline{V}_{i\beta}^\top (\beta - \beta_0) + \overline{V}_{i\alpha\alpha}\psi_{i1}^2 \right),
\]
and $\psi_{i2} = O_p(T^{-1})$.

We now plug in $\widehat{\beta}$ to get $\widehat{\alpha}_i = \widehat{\alpha}_i(\widehat{\beta})$. Note that $\Vert\widehat{\beta}-\beta_0\Vert = O_p((NT)^{-1/2}) = O_p(T^{-1})$. Now by part \emph{(i)} of Lemma~\ref{lem:alpha-remainder} we have
\[
  \begin{aligned}
    |R_i(\widehat{\beta})|
    & \leq \Vert\widehat{\beta}-\beta_0\Vert^2\Vert{\partial_{V\beta\beta'} \ell_i^*(\overline{\beta}, V_i)}\Vert
    + \abs*{V_i}\;\Vert\widehat{\beta}-\beta_0\Vert\; \Vert\partial_{VV\beta'} \ell_i^*(\beta_0, \widetilde{V}_i)\Vert \\
    &\qquad + \abs*{V_i^3}\;\abs*{\partial_{VVVV} \ell_i^*(\beta_0, \ddot{V}_i)} \\
    &= O_P(T^{-2})O_p(1)
    + O_{p}(T^{-1/2})O_p(T^{-1})O_p(1) + O_p(T^{-3/2})O_p(1) = O_P(T^{-3/2})
  \end{aligned}
\]
Combining this with the above, we have
\[
\widehat{\alpha}_i - \alpha_i = -\psi_{i1} - \psi_{i2} + O_p(T^{-3/2})
\]

For part (iii), we now derive a maximal inequality over units. We have
\[
\max_{i} \abs*{\widehat{\alpha}_i - \alpha_{i0}} \leq \max_i \abs*{\psi_{i1}} + \max_i \abs*{\psi_{i2}} + \max_i \abs*{R_{i}(\beta)}
\]
By Lemma~\ref{lem:mixing-bounds}, we have:
\[
  \begin{aligned}
    \E_{\alpha}\left[ \left( \max_i \abs*{V_i} \right)^8 \right] &= T^{-4}\E_{\alpha}\left[\max_i \left( \frac{1}{\sqrt{T}} \sum_{t=1}^T V_{it} \right)^8\right] \\
    &\leq T^{-4} \sum_{i} \E_{\alpha}\left[ \left( \frac{1}{\sqrt{T}} \sum_{t=1}^T V_{it} \right)^8 \right] \\
    &\leq T^{-4}NB = O(T^{-3})
  \end{aligned}
\]
Thus, we have $\max_i |V_i| = O_p(T^{-3/8})$, $\max_{i} |V_i^2| = O_p(T^{-3/16})$, and $\max_i |V_i^3| = O_p(T^{-9/8})$.  Furthermore, recall that $\widehat{\alpha}_i = \widehat{\alpha}_i(\widehat{\beta})$ and that $\Vert\widehat{\beta} - \beta_0\Vert$ is $O_p((NT)^{-1/2}) = O_p(T^{-1})$. Finally, recall that $\inf_i |V_{it\alpha}| > 0$, which implies that $\inf_i |\overline{V}_{i\alpha}| > 0$.  These facts combined with part \emph{(ii)} of Lemma~\ref{lem:alpha-remainder} imply for some constants $C_1$
\[
  \begin{aligned}
    \max_i \abs*{\psi_{i1}} &= \max_i \abs*{\overline{V}_{i\alpha}^{-1}V_i} < C\max_i \abs*{V_i} =  O_p(T^{-3/8}) \\
    \max_i\abs*{\psi_{i2}} &= \max_i \abs*{\overline{V}_i^{-1}\left(\overline{V}_{i\beta}(\widehat{\beta} - \beta_0) + \overline{V}_{i\alpha\alpha}\psi_{i1}^2 \right)} \\
    &< C \left(  \Vert\widehat{\beta} - \beta_0\Vert \max_i\Vert\overline{V}_{i\beta}\Vert + \max_i\abs*{\overline{V}_{i\alpha\alpha}}\max_i\abs*{\psi_{i1}^2}\right) \\
    & = O_p(T^{-1} + T^{-3/4}) = o_p(T^{-1/2}),\\
    \max_i |R_i(\widehat{\beta})| &= o_p(T^{-1}),
  \end{aligned}
\]
Here, we used the fact that $\max_i |\overline{V}_{i\alpha\alpha}| < E_{\alpha}[M(Z_{it})]$, which is uniformly bounded over $i$ and $t$.  These three combined implies $\max_i \abs*{\widehat{\alpha}_i - \alpha_{i0}} = O_p(T^{-3/8})$.
\end{proof}

\begin{proof}[Proof of Lemma~\ref{lem:alpha-remainder}]
  Part \emph{(i)}. Let $\xi_{it}$ be one of $V_{it\beta_k}$, $V_{it\alpha\alpha}$, $V_{it\alpha\alpha\alpha}$, $V_{i\beta_k\alpha}$, or $V_{it\beta_k\beta_j\alpha}$ and note that $E[\abs*{\xi_{it}}^{8+\nu}] < E[M(Z_{it})^{8+\nu}]< \infty$ by Assumption~\ref{a:bounded-derivatives}. We have:
  \[
  \begin{aligned}
    \E_{\alpha}\left[ \left( \sup_{(\beta, \alpha) \in \mathcal{B}_0(\epsilon)} \frac{1}{T}\sum_{t=1}^T |\xi_{it}(\beta, \alpha)| \right)^{(8+\nu)} \right]
    &\leq \E_{\alpha}\left[\left(  \frac{1}{T}\sum_{t=1}^T M(Z_{it}) \right)^{(8+\nu)} \right]\\
    &\leq \E_{\alpha}\left[\frac{1}{T}\sum_{t=1}^T M(Z_{it})^{(8+\nu)} \right]\\
    & = \frac{1}{T} \sum_{t=1}^T \E_{\alpha}[M(Z_{it})^{(8+\nu)}] = O_p(1)
  \end{aligned}
\]
Thus, $\sup_{(\beta, \alpha) \in \mathcal{B}_0(\epsilon)} \frac{1}{T}\sum_{t=1}^T |\xi_{it}(\beta, \alpha)| = O_p(1)$. From the expression for $\partial_{VV\beta}$ given above, we have
\[
  \begin{aligned}
    \sup_{(\beta, \alpha) \in \mathcal{B}_0(\epsilon)} \abs*{\partial_{VV\beta_k} \ell_i^*(\beta, \alpha)}
    &\leq \sup_{(\beta, \alpha) \in \mathcal{B}_0(\epsilon)} \left(   \abs*{\frac{V_{i\beta_k\alpha}(\beta, \alpha)}{V_{i\alpha}(\beta, \alpha)^2}} + \abs*{\frac{V_{i\alpha\alpha}(\beta, \alpha)V_{i\beta_k}(\beta, \alpha)}{V_{i\alpha}^3(\beta, \alpha)}}\right) \\
    & < \sup_{(\beta, \alpha) \in \mathcal{B}_0(\epsilon)} \left(\abs*{V_{i\beta_k\alpha}(\beta, \alpha)} + \abs*{V_{i\alpha\alpha}(\beta, \alpha)V_{i\beta_k}(\beta, \alpha)}\right) \\
    &= O_p(1)
  \end{aligned}
\]
The second inequality follows from $\inf_i |V_{it\alpha}(\beta, \alpha)| > 0$. The other statements in part \emph{(i)} follow analogously.

For the maximal results in part \emph{(ii)}, we follow a similar strategy:
  \begin{align*}
    \E\left[ \left( \sup_{(\beta, \alpha) \in \mathcal{B}_0(\epsilon)} \max_i \frac{1}{T}\sum_{t=1}^T |\xi_{it}(\beta, \alpha)| \right)^{(8+\nu)} \right]
    &= \E\left[\max_i \left( \sup_{(\beta, \alpha) \in \mathcal{B}_0(\epsilon)}  \frac{1}{T}\sum_{t=1}^T |\xi_{it}(\beta, \alpha)| \right)^{(8+\nu)} \right] \\
    &\leq \E\left[\sum_{i=1}^N \left( \sup_{(\beta, \alpha) \in \mathcal{B}_0(\epsilon)}  \frac{1}{T}\sum_{t=1}^T |\xi_{it}(\beta, \alpha)| \right)^{(8+\nu)} \right] \\
    &\leq \E\left[\sum_{i=1}^N \left(  \frac{1}{T}\sum_{t=1}^T M(Z_{it}) \right)^{(8+\nu)} \right]\\
    &\leq \E\left[\sum_{i=1}^N   \frac{1}{T}\sum_{t=1}^T M(Z_{it})^{(8+\nu)} \right] \\
    &= \sum_{i=1}^N \frac{1}{T} \sum_{t=1}^T \E[M(Z_{it})^{(8+\nu)}] = O(N)
  \end{align*}
Thus, $\sup_{(\beta, \alpha) \in \mathcal{B}_0(\epsilon)} \max_i \frac{1}{T}\sum_{t=1}^T |\xi_{it}(\beta, \alpha)| = O_p(N^{1/(8+\nu)}) = O_p(T^{1/(8+\nu)})$. From above we have,
\[
  \begin{aligned}
    \sup_{(\beta, \alpha) \in \mathcal{B}_0(\epsilon)}& \max_i \abs*{\partial_{VV\beta_k} \ell_i^*(\beta, \alpha)} \\
    &\leq \sup_{(\beta, \alpha) \in \mathcal{B}_0(\epsilon)} \max_i\left(   \abs*{\frac{V_{i\beta_k\alpha}(\beta, \alpha)}{V_{i\alpha}(\beta, \alpha)^2}} + \abs*{\frac{V_{i\alpha\alpha}(\beta, \alpha)V_{i\beta_k}(\beta, \alpha)}{V_{i\alpha}^3(\beta, \alpha)}}\right) \\
    & < \sup_{(\beta, \alpha) \in \mathcal{B}_0(\epsilon)} \max_i\left(\abs*{V_{i\beta_k\alpha}(\beta, \alpha)} + \abs*{V_{i\alpha\alpha}(\beta, \alpha)V_{i\beta_k}(\beta, \alpha)}\right) \\
    &= O_p\left( T^{1/(8+\nu)} + T^{2/(8+\nu)} \right) = O_p(T^{2/(8+\nu)})
  \end{aligned}
\]
The other results follow analogously.
\end{proof}

\begin{proof}[Proof of Lemma~\ref{lem:msm-second-order}]
  We prove this for $\overline{V}_i(\beta, \alpha)$ and $\overline{V}_{i\alpha}(\beta, \alpha)$, the rest follow from very similar arguments. Let $C_{\pi}$ be a uniform bound on $(\pi_{it}(\beta, \alpha)/(1-\pi_{it}(\beta, \alpha)))$ in $(\beta,\alpha) \in \mathcal{B}_{0}(\epsilon)$,
  which exists by the virtue of bounded propensity scores (Assumption~\ref{a:bounded-propensity-scores}). Furthermore, let $M_i = \max_t M(Z_{it})$ be a uniform bound for all $V_{it}(\beta, \alpha)$, which exists due to Assumption~\ref{a:bounded-derivatives}. Then, within $\mathcal{B}_{0}(\epsilon)$, for $\overline{V}_i$  we have:
  \[
    \begin{aligned}
      \abs*{\overline{V}_i(\beta, \alpha)} &= \Bigg\vert\sum^{T}_{t=T-k} \bigg[
  (2D_{it}-1) V_{it}(\beta,\alpha) \bigg\{ \frac{\pi_{it}(\beta,\alpha)}{1 - \pi_{it}(\beta,\alpha)} \bigg\}^{1-D_{it}}
  \bigg] \Bigg\vert\\
  & \leq \sum^{T}_{t=T-k}
  \vert V_{it}(\beta,\alpha)\vert \bigg\{ \frac{\pi_{it}(\beta,\alpha)}{1 - \pi_{it}(\beta,\alpha)} \bigg\}^{1-D_{it}}   \\
  & < C_{\pi}\sum^{T}_{t=T-k}
  \vert V_{it}(\beta,\alpha)\vert < C_{\pi}kM_i
    \end{aligned}
  \]
  Thus, $\sup_{(\beta,\alpha)\in\mathcal{B}_{0}(\epsilon)} \abs*{\overline{V}_i(\beta, \alpha)} < \widetilde{M}_i = C_{\pi}kM_i$. Since $\E[\widetilde{M}_i^4] = C_{\pi}^4k^4\E[M_i^4]$, we obtain the result for the first quantity.

  For $\overline{V}_{i\alpha}$ there are two terms. For the first term, we have
    \[
    \begin{aligned}
      \abs*{\overline{V}_i^2(\beta, \alpha)} &= \Bigg\vert\sum_{t=T-k}^T \sum_{s=T-k}^T (2D_{it} - 1)(2D_{is} - 1)V_{it}(\beta, \alpha)V_{is}(\beta, \alpha) \\
      &\qquad \times {\left(  \frac{\pi_{it}(\beta, \alpha)}{1-\pi_{it}(\beta, \alpha)}\right)}^{1-D_{it}}{\left(  \frac{\pi_{is}(\beta, \alpha)}{1-\pi_{is}(\beta, \alpha)}\right)}^{1-D_{is}} \Bigg\vert \\
      &\leq \sum_{t=T-k}^T \sum_{s=T-k}^T \abs*{V_{it}(\beta, \alpha)V_{is}(\beta, \alpha)}{\left(  \frac{\pi_{it}(\beta, \alpha)}{1-\pi_{it}(\beta, \alpha)}\right)}^{1-D_{it}}{\left(  \frac{\pi_{is}(\beta, \alpha)}{1-\pi_{is}(\beta, \alpha)}\right)}^{1-D_{is}} \\
    &< C_{\pi}^2  \sum_{t=T-k}^T \sum_{s=T-k}^T \abs*{V_{it}(\beta, \alpha)V_{is}(\beta, \alpha)} \\
    &< C_{\pi}^2k^2M^2_i
    \end{aligned}
  \]

  Now, for the second term, we have:
  \[
    \begin{aligned}
      \frac{\partial \overline{V}_i(\beta, \alpha)}{\partial \alpha} & = \sum_{t=T-k}^T (2D_{it} - 1) V_{it\alpha}(\beta, \alpha){\left(  \frac{\pi_{it}(\beta, \alpha)}{1-\pi_{it}(\beta, \alpha)}\right)}^{1-D_{it}} \\
      &\qquad + \sum_{t=T-k}^T \frac{(1 - D_{it})}{1-\pi_{it}(\beta, \alpha)} V_{it}(\beta, \alpha)^2
      {\left(  \frac{\pi_{it}(\beta, \alpha)}{1-\pi_{it}(\beta, \alpha)}\right)}^{1-D_{it}}
    \end{aligned}
  \]

  Let $\overline{C}_{\pi} > (1 - \pi_{it}(\beta, \alpha))^{-1}$, which exists by Assumption~\ref{a:bounded-propensity-scores}. By the above argument, we have:

  \[
    \begin{aligned}
      \abs*{\frac{\partial \overline{V}_i(\beta, \alpha)}{\partial \alpha}} & < C_{\pi}kM_i + \overline{C}_{\pi}C_{\pi}kM_i^2
    \end{aligned}
  \]
  Then, we have
  \[
\sup_{(\beta,\alpha)\in\mathcal{B}_{0}(\epsilon)} \abs*{\overline{V}_{i\alpha}(\beta, \alpha)} < \widetilde{M}_i = (C_{\pi}^2k^2 + \overline{C}_{\pi}C_{\pi}k)M^2_i + C_{\pi}kM_i
\]
and
\[
  \begin{aligned}
    \E[\widetilde{M}_i^{4}] &= C_1\E[M^8_i] + C_2\E[M_i^7] + C_3\E[M_i^6] + C_4\E[M_i^5] + C_5\E[M_i^4],
  \end{aligned}
\]
where $C_1$ through $C_5$ are constants that depend on $C_{\pi}$, $\overline{C}_{\pi}$, and $k$. By Assumption~\ref{a:bounded-moments}, each of the expectations on the right-hand side is uniformly bounded in $N$, which implies that $\E[\widetilde{M}_i^{4}]$ is also uniformly bounded in $N$.
\end{proof}

\clearpage
\section{Additional Simulation Results}
\label{appx:additional-simulation}

\begin{figure}[htb]
\centerline{
  \includegraphics[width=\textwidth]{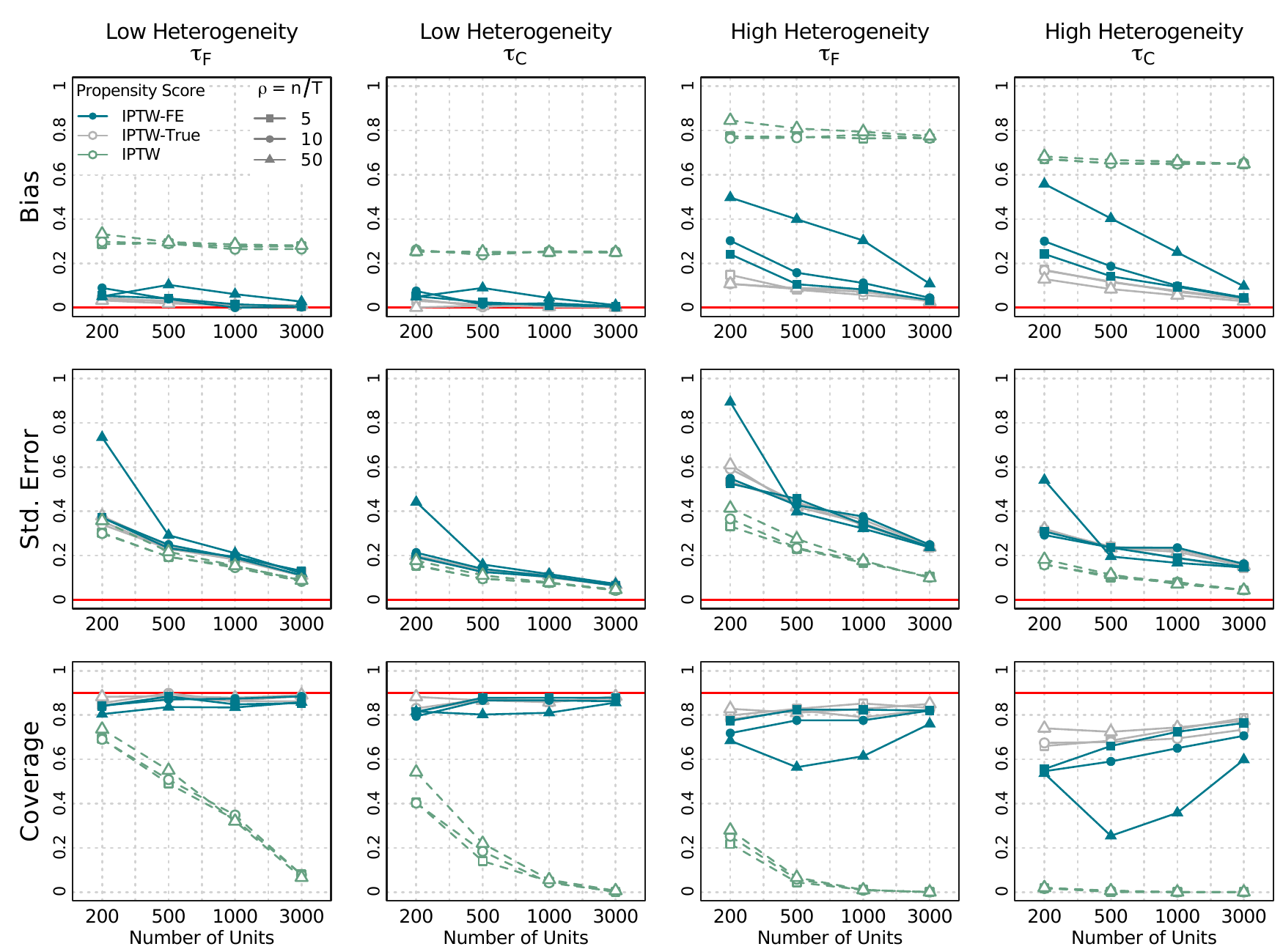}
}
\caption{
  Bias, standard error (Std. Error) and coverage probability of 90\% confidence intervals (Coverage) for the estimation of the final period effect $\tau_{F}$ and the cumulative effect $\tau_{C}$ under the ``low'' heterogeneity ($a = 1$) -- first two columns --
  and the ``high'' heterogeneity ($a = 2$) -- last two columns -- scenario.
  The number of time-varying covariates is four.
  Solid lines in \textcolor{simblue}{blue} show the proposed estimator
  (\IPTWFE),
  solid lines in \textcolor{simgrey}{grey} show the estimator based on the true propensity score (\IPTWT), and
  dashed lines in \textcolor{simgreen}{green} show the estimator based on the estimated propensity score without fixed effects (\IPTW).
  Shapes correspond to the $n$ to $T$ ratio $\rho$ such that
  squares represent $\rho = 5$ (the largest number of time periods),
  circles represent $\rho = 10$, and
  triangles represent $\rho  = 50$ (the smallest number of time periods)
}
\end{figure}

\begin{figure}[htb]
\centerline{
  \includegraphics[scale=0.75]{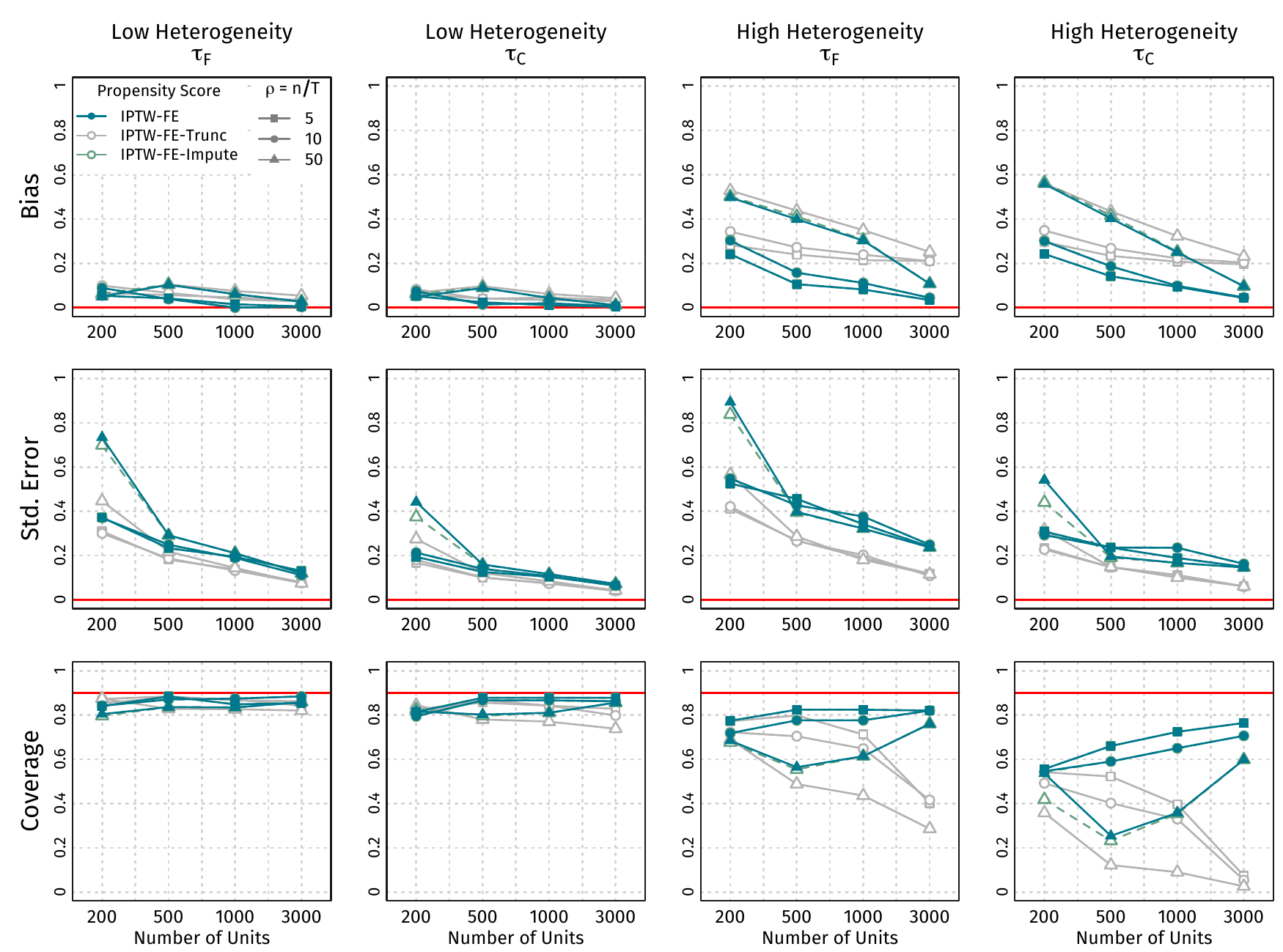}
}
\centerline{
  \includegraphics[scale=0.75]{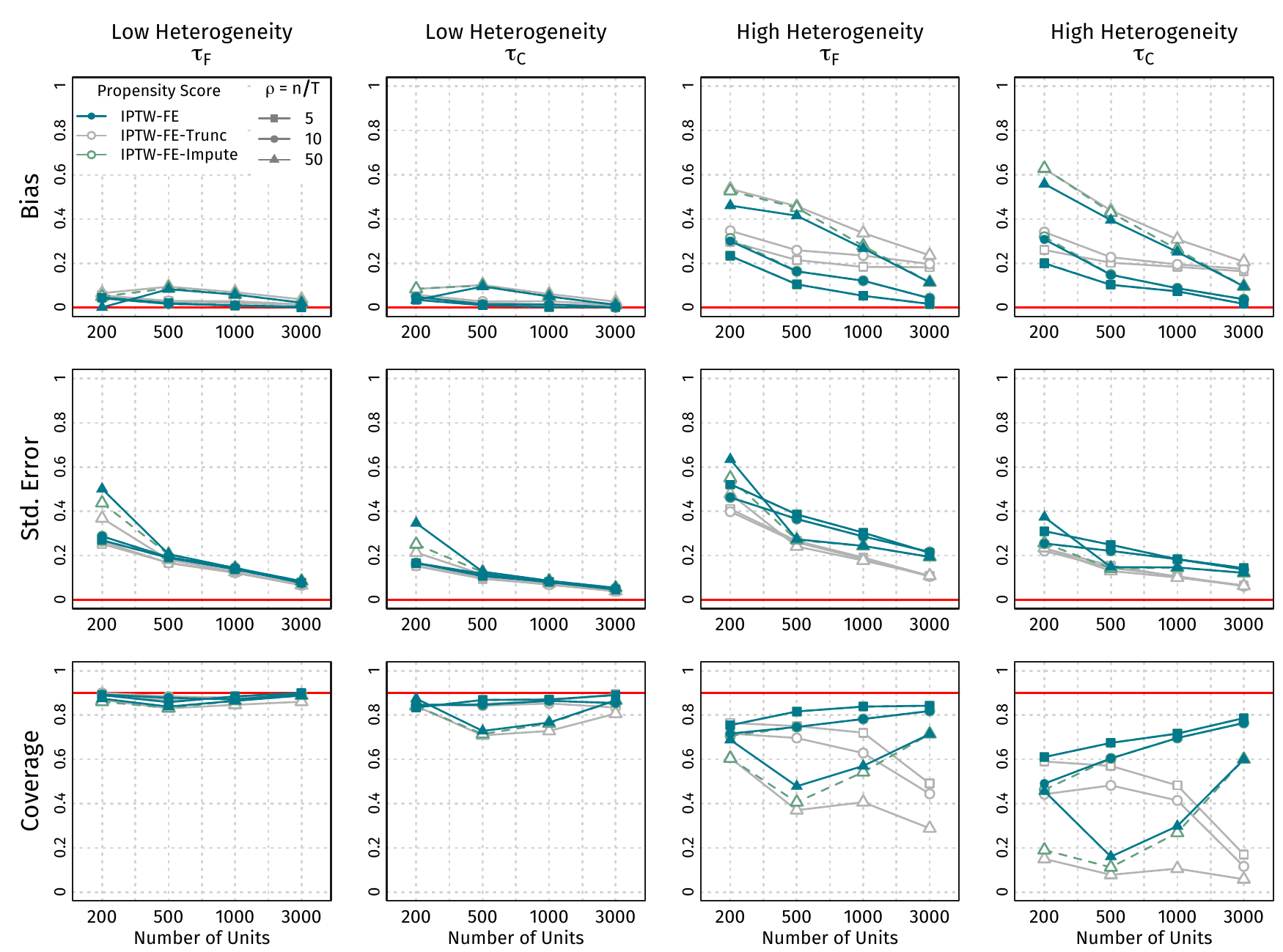}
}
\caption{Simulation results for imputing the non-identified fixed effect estimates.}
\end{figure}

\end{document}